\documentclass[11pt]{llncs}

\usepackage[latin5]{inputenc}


\usepackage{amssymb}
\usepackage{amsmath}
\usepackage{amsthm}
\usepackage{textcomp}
\usepackage{array}
\usepackage{multirow}
\usepackage{rotating}
\usepackage{color}
\usepackage[left=2.5cm,top=3cm,right=2.5cm,bottom=3cm]{geometry}


\setlength{\tabcolsep}{1px}


\newcommand{\bra}[1]{\langle #1|}
\newcommand{\ket}[1]{|#1\rangle}

\newcommand{\cent}[0]{\mbox{\textcent}}
\newcommand{\dollar}[0]{\$}

\title{Succinctness of two-way probabilistic and quantum finite automata\thanks{A preliminary version of this paper 
was presented at the AutoMathA Plenary Conference 2009, in Li\`{e}ge, Belgium and this 
work was partially supported by the Scientific and Technological
Research Council of Turkey (T\"{U}B\.ITAK) with grant 108142
and the Bo\u{g}azi\c{c}i University Research Fund with grant 08A102.}}
\author{Abuzer Yakary{\i}lmaz\ \and A.C. Cem Say }
\institute{Bo\u{g}azi\c{c}i University, Department of Computer Engineering,\\ Bebek 34342 \.{I}stanbul, Turkey \\
\email{{abuzer,say}@boun.edu.tr}
 \\~~\\
December 22, 2009
}

\begin{document}

\newlength{\twidth}
\maketitle
\pagenumbering{arabic}

\begin{abstract} \label{abstract:Abstract}
We prove that two-way probabilistic and quantum finite automata
(2PFA's and 2QFA's) can be considerably more concise
than both their one-way versions (1PFA's and 1QFA's), and two-way
nondeterministic finite automata (2NFA's). For this purpose, we demonstrate
several infinite families of regular languages which can be recognized
with some fixed
probability greater than $ \frac{1}{2} $ by just tuning the transition
amplitudes of a 2QFA (and, in one case, a
2PFA) with a constant number of states, whereas the sizes of the
corresponding 1PFA's, 1QFA's and 2NFA's
grow without bound. We also show that 2QFA's with mixed states can
support highly efficient probability amplification. The weakest
known model of computation where quantum computers recognize more
languages with bounded error than their classical counterparts is introduced.
\end{abstract}

\section{Introduction} \label{section:Introduction}

In recent years, the research effort on quantum versions of finite
automata has mainly focused on one-way models, with the study of
two-way quantum finite automata (2QFA's), which are synonymous with
constant space quantum Turing machines, receiving relatively less
attention. In their seminal paper, Kondacs and Watrous \cite{KW97} proved
that 2QFA's recognize all regular languages with zero error, and the
language $ L_{eq}=\{a^{n}b^{n} \mid n \ge 0\} $
with any desired error bound $ \epsilon > 0 $, in
time $ O(\frac{1}{\epsilon}|w|) $, using $ O(\left( \frac{1}{\epsilon}
\right)^{2} ) $ states,
where $ w $ is the input string.
Since two-way probabilistic finite automata (2PFA's) can decide $
L_{eq} $ only in exponential time
\cite{Fr81,KF90,DS92}, this established the superiority of
2QFA's over 2PFA's. Parallelling work by Aharonov et al. \cite{AKN98}
on quantum circuits with mixed states, Ambainis and
Watrous \cite{AW02} introduced an alternative model, the two-way finite
automaton with quantum and classical states (2QCFA), which includes a
constant-size quantum part which may be in a mixed state, but requires
the tape head position to be classical. Yakary{\i}lmaz and Say
\cite{YS09B} noted that
conventional methods of probability amplification give significantly
inefficient results when applied to 2QFA's, and presented methods
which can be used to decide $ L_{eq} $ with error bound $ \epsilon $
in
as low as $ O(|w|) $ steps (i.e. with runtime independent of $ \epsilon $),
and with as low as $ O(\log^{2}(\frac{1}{\epsilon})\log\log(\frac{1}{\epsilon})) $ states.

Issues of succinctness, as exemplified above, constitute a rich
subtopic of automata theory \cite{DS90,KF90,AF98,MPP01,ANTV02,FOM09,YS09B}.
In this paper, we examine how the combination of two-wayness
and (quantum or classical) probabilistic transitions affects
succinctness. As our main result, we demonstrate several infinite families
of regular languages which can be recognized with some fixed
probability greater than $ \frac{1}{2} $ by just tuning the transition
amplitudes of a 2QFA (and, in one case, a 2PFA) with a constant number
of states, whereas the sizes of the corresponding one-way machines,
and two-way nondeterministic finite automata (2NFA's) grow without bound.

The Kondacs-Watrous model of quantum finite automaton (to be
called, from now on, KWQFA), which allows measurements of a restricted
type, rather than the full set sanctioned by quantum theory, has been
proven to be weaker in terms of language recognition power \cite{KW97},
probability amplification capability \cite{AF98}, and, in
some cases at least, succinctness \cite{ANTV02}, than the
corresponding classical model, in the one-way case. More general
models, such as the 2QCFA, employing mixed
states, are able to simulate the corresponding classical probabilistic
automata efficiently in both the one-way and two-way settings, and
to recognize some languages that 2PFA's cannot \cite{AW02}. We show
that 2QFA's with mixed
states can
support highly efficient probability amplification, surpassing the
best known methods for 2KWQFA's recognizing these languages.

We introduce a new model of quantum automaton,
named the two-way quantum finite automaton with reset.
This is an enhancement to the 2KWQFA, endowing it with the capability of
resetting the position of the tape
head to the left end of the tape in a single move during the computation.
We use this model both in the proof of our main result, and in the
demonstration of our probability amplification techniques. We mostly
focus on a restricted form of these machines, called the one--way
quantum finite automaton with restart (1QFA$^{\circlearrowleft}$), which can switch only to the
initial state during left resets, and cannot perform single-step left
or stationary moves. We give evidence that this is the weakest
known model of computation where quantum computers recognize more
languages with bounded error than their classical counterparts.

The rest of this paper is structured as follows: Section 2 contains
the definitions and some basic facts about our new model that will be
used throughout the paper. In Section 3, we prove some key lemmata about 
the relationship between one--way quantum finite automata with and without restart, 
and examine the class of languages recognized with bounded error by 1QFA$^{\circlearrowleft}$'s.
Our main succinctness result is presented
in Section 4. In Section 5, we present several algorithms that improve
previous results about the efficiency of probability amplification in
2KWQFA's and 2QCFA's. In Section 6, we investigate the
computational power of probabilistic finite
automata with restart. Section 7 is a
conclusion.

\section{Preliminaries} \label{section:Preliminaries}

Watrous \cite{Wa97} notes that a 2KWQFA
algorithm he presents for recognizing a nonregular language is
remarkably costly in terms of probability amplification, and states
that this problem stems from the fact that 2KWQFA's cannot ``reset"
themselves during execution to repeatedly carry out the same
computation. The 2QCFA model provides one way of solving this problem,
by having a classical part, in addition to the quantum register. We
present an alternative 2QFA model, employing only quantum states,
whose only difference from the 2KWQFA is the existence of an
additional ``reset move" in its repertory. Section 2.1 contains the
definitions of this and the other models that will be examined in the
paper. Section 2.2 describes some facts which will make the analyses
of the algorithms in later sections easier.

\subsection{Definitions} \label{subsection:definitions}
Let $ \Sigma $ be an input alphabet, not containing the end--marker
symbols $ \cent $ and $ \dollar $, and let $ \Gamma = \Sigma \cup
\{\cent, \dollar\} $ be the tape alphabet.

A \textit{2-way quantum finite automaton with reset} (2QFA$
^{\curvearrowleft} $) is a 7-tuple
\begin{equation}
    \label{equation:2qfa-with-reset-tuple}
   \mathcal{M}=(Q,\Sigma,\delta,q_{0},Q_{acc},Q_{rej},
   Q_{reset}=\cup_{ q \in Q_{non}} Q^{\curvearrowleft}_{q} ),
\end{equation}
where
\begin{enumerate}
      \item $ Q = \{q_{0},\ldots,q_{n} \} $ is the finite set of states;
      \item $ \delta $ is the transition function, described below;
      \item $ q_{0} \in Q $ is the initial state;
      \item $ Q_{acc} $ is the set of accepting states;
      \item $ Q_{rej} $ is the set of rejecting states;
		\item $ Q_{non} = Q \setminus (Q_{acc} \cup Q_{rej} \cup Q_{reset}) $ is the set of nonhalting
     and nonresetting states;
      \item $ Q_{reset} $ is the union of disjoint reset sets, i.e., each $ Q_{q \in Q_{non}}^{\curvearrowleft} $
     contains reset states that cause the computation to restart with state $ q $, as described below.
\end{enumerate}

We assume that the states in $ Q_{non} $ have smaller indices than
other members of $Q$; $ q_{i} \in Q_{non} $ for $ 0 \le i < | Q_{non} | $.

The configurations of a 2QFA$ ^{\curvearrowleft} $ are pairs of the form
$(state,head~position)$. Initially, the head is on the left end-marker $\cent$,
and so the machine starts computation in the superposition $ \ket { q_{0},0 }$.

The transition function of a 2QFA$ ^{\curvearrowleft} $ working on an
input string $ w \in \Sigma^{*}
$, (that is, a tape containing $ \mathsf{w}= \cent w \dollar $,)
is required to induce a unitary operator $U_{\delta}^{w}$ on the
Hilbert space $\ell_2(Q \times \mathbb{Z}_{|\mathsf{w}|} )$, since
quantum machines can exist in superpositions of more than one
configuration.

In all 2QFA$ ^{\curvearrowleft} $'s described in this paper, every
transition entering the same state  involves the tape
head moving in the same direction (left, right, or stationary).
With this simplification, considering the Hilbert space $\ell_2(Q)$,
a syntactically correct 2QFA$ ^{\curvearrowleft} $ (that is, one where
$U_{\delta}^{w}$ is unitary for every $w$,) can be specified
easily by just providing
a unitary operator $\mathsf{U}_{\sigma}$ : $\ell_2(Q) \rightarrow
\ell_2(Q)$ for each symbol $\sigma\in\Gamma$.
More formally,
\begin{equation}
   \delta(q,\sigma,q^{'},d_{q^{'}})=\bra{q^{'}}{\mathsf{U}_{\sigma}}\ket{q}
\end{equation}
is the amplitude with which the machine currently in state $q$ and
scanning symbol $\sigma$
will jump to state $q^{'}$ and move the head in direction $d_{q^{'}}$.
Here, $d_{q^{'}}\in\{-1,0,1\}$ is the direction of the tape head
determined by $q^{'}$. For the remaining directions, all transitions
with
target $q^{'}$ have amplitude zero.

Apart from the left reset capability, 2QFA$ ^{\curvearrowleft} $'s are
identical to 2KWQFA's. In the following, we focus on this new
capability, and refer the reader to \cite{KW97} for detailed coverage
of the technical properties of 2KWQFA's.

In each step of its execution, a 2QFA$ ^{\curvearrowleft} $ undergoes two linear
operations: The first one is a unitary transformation of the current
superposition according to $\delta$, and the second one is a
measurement. The observable
describing this measurement process is designed so that the outcome of
any observation is one of ``accept",
``reject", ``continue without resetting", or ``reset with state $ q
$", for any $ q
\in Q_{non} $.
Formally, we use the observable $ \mathcal{O} $, corresponding to the
decomposition
\begin{equation}
	\label{equation:Epartition}
     E=E_{acc} \oplus E_{rej} \oplus E_{non} \oplus E_{reset-0} \oplus E_{reset-1} 
     \oplus \cdots \oplus E_{reset-(k-1)},
\end{equation}
where $ k=|Q_{non}|$, and for a given input $ w$,
\begin{enumerate}
    \item the set of all configurations of the 2QFA$
^{\curvearrowleft} $ is $ Q \times \mathbb{Z}_{|\mathsf{w}|} $;
    \item $ E = \mbox{span} \{\ket{c} \mid c \in Q \times
\mathbb{Z}_{|\mathsf{w}|} \} $;
    \item $ E_{acc} = \mbox{span} \{\ket{c} \mid c \in Q_{acc}
\times \mathbb{Z}_{|\mathsf{w}|} \} $;
    \item $ E_{rej} = \mbox{span} \{\ket{c} \mid c \in Q_{rej}
\times \mathbb{Z}_{|\mathsf{w}|} \} $;
    \item $ E_{non} = \mbox{span} \{\ket{c} \mid c \in Q_{non}
\times \mathbb{Z}_{|\mathsf{w}|} \} $;
    \item $ E_{reset-i} = \mbox{span} \{\ket{c} \mid c \in Q_{q_{i}
\in Q_{non}}^{\curvearrowleft} \times
                     \mathbb{Z}_{|\mathsf{w}|} \} $ $ (0 \le i \le k-1) $.
\end{enumerate}
The probability of each outcome
is determined by the amplitudes of the relevant configurations in the
present superposition. The contribution of each configuration
to this probability is the modulus squared of its amplitude. 
For instance, the outcome ``reset with state $q_i$" will be measured with probability
$\sum_{c \in Q_{q_{i}}^{\curvearrowleft} \times \mathbb{Z}_{|\mathsf{w}|}} |\alpha_c|^2$,
where $\alpha_c$ is the amplitude of configuration $c$.
If ``accept" or ``reject" is measured, the computation halts.
If ``continue without resetting" is measured, the machine continues
running from a
superposition of the nonhalting and nonresetting configurations, obtained by
normalizing the projection of the superposition before the measurement
onto $ \mbox{span} \{\ket{c} |
c \in Q_{non} \times \mathbb{Z}_{|\mathsf{w}|} \} $. If
``reset with state $ q_i $" is measured, the tape head is reset to point
to the left end-marker, and the machine continues from the
superposition $ \ket { q_i,0 }$ in the next step. Note that the decoherence
associated with this measurement means that the system allows mixed states.

A 2QFA$ ^{\curvearrowleft} $ $ \mathcal{M} $ is said to recognize a
language $ L $ with error bounded by
$\epsilon$ if $ \mathcal{M} $'s computation results in ``accept" being
measured for all members of $ L $
with probability at least $ 1-\epsilon $, and ``reject" being measured
for all other inputs with probability at
least $1-\epsilon$.

A \textit{2-way quantum finite automaton with restart} (2QFA$
^{\circlearrowleft} $) is a
restricted 2QFA$ ^{\curvearrowleft} $ in which the ``reset moves" can
target only the original start state of
the machine, that is, in terms of Equation \ref{equation:2qfa-with-reset-tuple},
all the $ Q^{\curvearrowleft}_{q} $ of a 2QFA$ ^{\circlearrowleft} $ are
empty, with the exception of $ Q^{\curvearrowleft}_{q_{0}} $,
represented as $ Q_{restart} $.

The \textit{two-way probabilistic finite automaton} (2PFA) is the
classical probabilistic counterpart of 2KWQFA's; see \cite{Ka91}
for the details. A \textit{one-way probabilistic finite automaton}
(1PFA) \cite{Ra63} is a 2PFA in which the head moves
only to the right in every step. A \textit{rational 1PFA}  \cite{Tu69b} 
is a 1PFA where all entries in the transition matrices are rational numbers.

Other variants of two-way automata with reset that will be examined
in this paper are
\begin{enumerate}
     \item A \textit{one-way (Kondacs-Watrous) quantum finite
automaton with reset}
     (1QFA$ ^{\curvearrowleft} $) is a restricted 2QFA$
^{\curvearrowleft} $ which uses neither ``move
     one square to the left" nor ``stay put" transitions, and whose
tape head is therefore classical,
  \item A \textit{one-way (Kondacs-Watrous) quantum finite
automaton with restart} (1QFA$ ^{\circlearrowleft} $)
  is a 1QFA$ ^{\curvearrowleft} $ where the reset moves can target
only the original start state, and,
     \item A \textit{one-way probabilistic finite automaton with
restart} (1PFA$ ^{\circlearrowleft} $) is a 1PFA
     which has been enhanced with the capability of resetting the
tape head to the left
      end-marker and swapping to the original start state.
\end{enumerate}

A \textit{one-way (Kondacs-Watrous) quantum finite
automaton} (1KWQFA) \cite{KW97} is a 2KWQFA  which  moves its tape head
only to the right in every step.

A well--known two-way mixed--state model is the 2QCFA \cite{AW02}.
Formally, a \textit{2-way finite automaton with quantum and classical
states} (2QCFA) is a 9-tuple
\begin{equation}
	\label{equation:2qcfa}
	\mathcal{M}=(Q,S,\Sigma,\mathsf{\Theta},\delta,q_{0},s_{0},S_{acc},S_{rej}),
\end{equation}
where
\begin{enumerate}
    \item $ Q = \{q_{0},\ldots,q_{n_{1}} \} $ is the finite set of the quantum states;
    \item $ S = \{s_{0},\ldots,s_{n_{2}} \} $ is the finite set of the classical states;
    \item $ \mathsf{\Theta} $ and $ \delta $ govern the machine's behavior, as described below;
    \item $ q_{0} \in Q $ is the initial quantum state;
    \item $ s_{0} \in S $ is the initial quantum state;
    \item $ S_{acc} \subset S $ is the set of classical accepting states;
    \item $ S_{rej} \subset S $ is the set of classical rejecting states.
\end{enumerate}

The functions $ \mathsf{\Theta} $ and $ \delta $ specify the evolution
of the quantum and classical parts of $ \mathcal{M} $, respectively.
Both functions take the currently scanned symbol $ \sigma \in \Gamma $
and current classical state $ s \in S $ as arguments.
$ \mathsf{\Theta}(s,\sigma) $ is either a unitary transformation, or an
orthogonal measurement.
In the first case, the new classical state and tape head direction 
(left, right, or stationary)
are determined by $ \delta $, depending on $s$ and $\sigma $. In the
second case, when an orthogonal measurement is applied on the quantum
part, $ \delta $ determines the new classical state and the tape head
direction using the result
of that measurement, as well as $s$ and $\sigma $.
The quantum and classical parts are initialized with $ \ket{q_{0}} $
and $ s_{0} $, respectively, and the tape head
starts on the first cell of the tape, on which $ \cent w \dollar $ is written
for a given input string $ w \in \Sigma^{*} $.
During the computation, if an accepting or rejecting
state is entered, the machine halts with the relevant response to the
input string.

Note that like the 1QFA$ ^{\curvearrowleft} $, and unlike the 2QFA
and the 2QFA$ ^{\curvearrowleft} $,
the tape head position of a 2QCFA is classical, (that is, there are no
superpositions with the head in more than one position
simultaneously,) meaning that the machine can be
implemented using a quantum part of constant size.

\subsection{Basic facts} \label{subsection:basicfacts}

We start by stating some basic facts concerning automata with restart, which
will be used in later sections.

A segment of computation which begins with a (re)start, and ends with
a halting or restarting configuration will be called a \textit{round}.
Clearly, every automaton with restart which makes nontrivial use of
its restarting capability will run for infinitely many rounds on some
input strings. Throughout this paper, we make the assumption that our
two-way automata do not contain infinite loops within a round, that
is, the computation restarts or halts with probability 1
in a finite number steps for each round.

Everywhere in this section, $ \mathcal{R} $ will stand for a finite
state automaton with restart, and $ w \in
\Sigma^{*} $ will represent an input string using the alphabet $ \Sigma $.

\begin{definition}
	\label{definition:acc-rej-halt}~
 \begin{list}{$ \bullet $}{}
        \item $ p_{acc}(\mathcal{R},w) $, $
p_{rej}(\mathcal{R},w) $, and $
p_{restart}(\mathcal{R},w) $
        denote the probabilities that
        $ \mathcal{R} $ will accept, reject, or restart, respectively, on
input $ w $, in the first round.
        \item $ \mathsf{P_{acc}}(\mathcal{R},w) $ and $
\mathsf{P_{rej}}(\mathcal{R},w) $
        denote the overall acceptance and rejection
probabilities of $ w $
by $ \mathcal{R} $, respectively.
 \end{list}
 Moreover, $
p_{halt}(\mathcal{R},w)=p_{acc}(\mathcal{R},w)+p_{rej}(\mathcal{R},w) $.
\end{definition}
\begin{lemma}
 \label{lemma:acc-rej}
 \begin{equation}
        \label{equation:Paccw-and-Prejw}
        \mathsf{P_{acc}}(\mathcal{R},w)=\dfrac{1}{1+\frac{p_{rej}(\mathcal{R},w)}{p_{acc}(\mathcal{R},w)}};~~
        \mathsf{P_{rej}(\mathcal{R},w)}=\dfrac{1}{1+\frac{p_{acc}(\mathcal{R},w)}{p_{rej}(\mathcal{R},w)}}.
 \end{equation}
\end{lemma}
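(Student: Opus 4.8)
The plan is to analyze the behavior of $\mathcal{R}$ over its (possibly infinite) sequence of rounds. Each round starts afresh from the (re)start configuration, so by the assumption that no round contains an infinite loop, each round independently terminates in one of three mutually exclusive ways — accept, reject, or restart — with probabilities $p_{acc} := p_{acc}(\mathcal{R},w)$, $p_{rej} := p_{rej}(\mathcal{R},w)$, and $p_{restart} := p_{restart}(\mathcal{R},w)$, where $p_{acc} + p_{rej} + p_{restart} = 1$. The overall acceptance probability $\mathsf{P_{acc}}(\mathcal{R},w)$ is then the probability that the first round to halt (rather than restart) does so by accepting.

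First I would write $\mathsf{P_{acc}}(\mathcal{R},w)$ as the geometric sum over the number $i \ge 0$ of restarting rounds preceding the halting round:
\begin{equation}
   \mathsf{P_{acc}}(\mathcal{R},w) = \sum_{i=0}^{\infty} (p_{restart})^{i}\, p_{acc} = \dfrac{p_{acc}}{1 - p_{restart}} = \dfrac{p_{acc}}{p_{acc}+p_{rej}},
\end{equation}
using $1 - p_{restart} = p_{acc} + p_{rej}$ and the fact that the series converges since $p_{restart} < 1$ whenever $p_{acc}+p_{rej} > 0$. Dividing numerator and denominator by $p_{acc}$ gives the claimed closed form $\mathsf{P_{acc}}(\mathcal{R},w) = 1 / (1 + p_{rej}/p_{acc})$; the expression for $\mathsf{P_{rej}}(\mathcal{R},w)$ follows by the symmetric argument.

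The one genuinely delicate point — and the step I expect to need the most care — is the boundary case $p_{acc}(\mathcal{R},w) = p_{rej}(\mathcal{R},w) = 0$, i.e. $p_{restart} = 1$: then $\mathcal{R}$ restarts forever with probability $1$, so $\mathsf{P_{acc}} = \mathsf{P_{rej}} = 0$, and the ratios in Equation~\ref{equation:Paccw-and-Prejw} are undefined (one has $0/0$). I would handle this by stating the convention that the formulas apply when the relevant denominator probability is nonzero, or equivalently by reading $p_{rej}/p_{acc} = \infty$ (resp. $p_{acc}/p_{rej} = \infty$) as giving value $0$ when exactly one of $p_{acc}, p_{rej}$ vanishes, which is consistent with the geometric-series computation above. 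A secondary technical remark worth making is that the independence of successive rounds is exactly what the ``no infinite loop within a round'' assumption buys us: each restart returns the machine to the identical configuration $\ket{q_0,0}$ with no residual memory (in the quantum case the intervening measurement fully decoheres the state), so the round outcomes form an i.i.d.\ sequence and the geometric-sum manipulation is justified.
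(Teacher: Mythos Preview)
Your proposal is correct and follows essentially the same route as the paper: both compute $\mathsf{P_{acc}}$ as the geometric series $\sum_{i\ge 0}(1-p_{acc}-p_{rej})^{i}p_{acc}=p_{acc}/(p_{acc}+p_{rej})$ and then divide through by $p_{acc}$, with the rejection case handled symmetrically. Your additional remarks on the degenerate case $p_{acc}=p_{rej}=0$ and on why successive rounds are i.i.d.\ are careful touches that the paper simply leaves implicit.
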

\begin{proof}
 \begin{eqnarray*}
       \mathsf{P_{acc}}(\mathcal{R},w) & = &
        \sum_{i=0}^{\infty}\left(1-p_{acc}(\mathcal{R},w)-p_{rej}(\mathcal{R},w)\right)^{i}p_{acc}(\mathcal{R},w)\\
               & = & p_{acc}(\mathcal{R},w)\left(
 \dfrac{1}{1-(1-p_{acc}(\mathcal{R},w)-p_{rej}(\mathcal{R},w))}
\right) \\
               & = &
\dfrac{p_{acc}(\mathcal{R},w)}{p_{acc}(\mathcal{R},w)+p_{rej}(\mathcal{R},w)}
\\
               & = &
\dfrac{1}{1+\frac{p_{rej}(\mathcal{R},w)}{p_{acc}(\mathcal{R},w)}}.
 \end{eqnarray*}
 $ \mathsf{P_{rej}}(\mathcal{R},w) $ is calculated in the same way.
\end{proof}

\begin{lemma}
 \label{lemma:prej-over-pacc}
 The language $ L \subseteq \Sigma^{*} $ is recognized by $ \mathcal{R} $ with error bound $ \epsilon >0 $  
 if and only if $ \frac{p_{rej}(\mathcal{R},w)}{p_{acc}(\mathcal{R},w)} \le
\frac{\epsilon}{1-\epsilon} $  when $ w \in L $, and $ \frac{p_{acc}(\mathcal{R},w)}{p_{rej}(\mathcal{R},w)} \le
\frac{\epsilon}{1-\epsilon} $ when $ w \notin L $.
Furthermore, if $ \frac{p_{rej}(\mathcal{R},w)}{p_{acc}(\mathcal{R},w)} $ 
$ ( \frac{p_{acc}(\mathcal{R},w)}{p_{rej}(\mathcal{R},w)} ) $ is at most $ \epsilon $, then 
$ \mathsf{P_{acc}}(\mathcal{R},w) $ $ ( \mathsf{P_{rej}}(\mathcal{R},w) ) $ is at least $ 1-\epsilon $.
\end{lemma}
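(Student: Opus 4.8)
The plan is to reduce everything to Lemma~\ref{lemma:acc-rej}, which already expresses the overall acceptance and rejection probabilities in terms of the single-round quantities. First I would set $r = \frac{p_{rej}(\mathcal{R},w)}{p_{acc}(\mathcal{R},w)}$ when $w \in L$ (and symmetrically $r' = \frac{p_{acc}(\mathcal{R},w)}{p_{rej}(\mathcal{R},w)}$ when $w \notin L$), noting that the standing assumption of no infinite loops within a round guarantees $p_{halt}(\mathcal{R},w) > 0$, so these ratios are well defined (with the convention that the ratio is $0$ when the numerator vanishes, and $+\infty$ when only the denominator vanishes, in which case the claimed inequalities fail and $\mathsf{P_{acc}}$ or $\mathsf{P_{rej}}$ equals $0$, consistent with the statement). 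By Lemma~\ref{lemma:acc-rej}, $\mathsf{P_{acc}}(\mathcal{R},w) = \frac{1}{1+r}$, and this is a strictly decreasing function of $r$ on $[0,\infty)$.

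The ``if and only if'' then becomes elementary algebra. For $w \in L$: the recognition condition requires $\mathsf{P_{acc}}(\mathcal{R},w) \ge 1-\epsilon$, i.e. $\frac{1}{1+r} \ge 1-\epsilon$, which rearranges to $1+r \le \frac{1}{1-\epsilon}$, i.e. $r \le \frac{1}{1-\epsilon} - 1 = \frac{\epsilon}{1-\epsilon}$. This chain of equivalences is reversible since $1-\epsilon > 0$ (we may assume $\epsilon < 1$, as otherwise the statement is vacuous). The case $w \notin L$ is identical with $\mathsf{P_{rej}}$ and $r'$ in place of $\mathsf{P_{acc}}$ and $r$, again via Lemma~\ref{lemma:acc-rej}. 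Quantifying over all $w$ gives the stated biconditional for the language $L$.

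For the ``furthermore'' clause, I would simply observe that $\frac{\epsilon}{1-\epsilon} \ge \epsilon$ for every $\epsilon \in [0,1)$, so the hypothesis $r \le \epsilon$ implies $r \le \frac{\epsilon}{1-\epsilon}$, and then the first part (the $w\in L$ direction, read as an implication) gives $\mathsf{P_{acc}}(\mathcal{R},w) = \frac{1}{1+r} \ge \frac{1}{1+\epsilon/(1-\epsilon)} = 1-\epsilon$; the parenthetical version for $\mathsf{P_{rej}}$ is the same computation.

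I do not anticipate a genuine obstacle here — the content is entirely in Lemma~\ref{lemma:acc-rej}, and what remains is bookkeeping. The only point requiring a little care is the degenerate case $p_{acc}(\mathcal{R},w) = 0$ (resp. $p_{rej}(\mathcal{R},w) = 0$), where the ratio is infinite; I would handle this explicitly by noting that then $\mathsf{P_{acc}}(\mathcal{R},w) = 0$ (resp.\ $\mathsf{P_{rej}} = 0$) by Lemma~\ref{lemma:acc-rej}, so neither the recognition condition nor the hypothesis $r \le \frac{\epsilon}{1-\epsilon}$ can hold, and the biconditional and the implication are both respected vacuously.
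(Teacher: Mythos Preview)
Your proposal is correct and follows essentially the same approach as the paper: reduce to Lemma~\ref{lemma:acc-rej} so that $\mathsf{P_{acc}} = \frac{1}{1+r}$, then invoke the elementary equivalence $\frac{1}{1+p} \ge 1-\epsilon \Leftrightarrow p \le \frac{\epsilon}{1-\epsilon}$ and the implication $p \le \epsilon \Rightarrow \frac{1}{1+p} \ge 1-\epsilon$. Your treatment of the degenerate $p_{acc}=0$ case is more careful than the paper's, but otherwise the arguments are the same.
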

\begin{proof}
 This follows from Lemma \ref{lemma:acc-rej}, since, for all $ p \ge 0 $, $ \epsilon \in [0,\frac{1}{2}) $,
 \begin{equation}
 \label{equation:1overp-epsilon}
       \frac{1}{1+p} \ge 1-\epsilon \Leftrightarrow p \le
\frac{\epsilon}{1-\epsilon}, \mbox{ and }
 \end{equation}
 \begin{equation}
 \label{equation:1overp-epsilon-simplified}
       p \le \epsilon \Rightarrow \frac{1}{1+p} \ge 1-\epsilon.
 \end{equation}
\end{proof}

\begin{lemma}
 \label{lemma:expected-runtime}
 Let $ p=p_{halt}(\mathcal{R},w) $, and let $ s(w) $ be the maximum number
of steps in any branch of a
 round of $ \mathcal{R} $ on $ w $.
 The worst-case expected runtime of $ \mathcal{R} $ on $ w $ is
 \begin{equation}
       \label{equation:expected-runtime}
       \frac{1}{p} (s(w)).
 \end{equation}
\end{lemma}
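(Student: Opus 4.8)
The plan is to decompose the computation of $\mathcal{R}$ on $w$ into rounds and to exploit the fact that every restart returns the machine to the fixed initial configuration $\ket{q_0,0}$, so that successive rounds behave independently and identically. First I would observe that, by the standing assumption that no round contains an infinite loop, each round halts or restarts with probability $1$ after at most $s(w)$ steps; since a round either halts (with probability $p=p_{halt}(\mathcal{R},w)$) or restarts (with probability $1-p$), and a restart re-initializes the computation to exactly the configuration it was in at time $0$, the outcome of round $i$ is independent of the outcomes of rounds $1,\dots,i-1$ and is distributed exactly as the outcome of round $1$.

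Next I would let $N$ denote the index of the first round in which $\mathcal{R}$ halts. By the independence just noted, $N$ is geometrically distributed: $\Pr[N=i]=(1-p)^{i-1}p$ for $i\ge 1$, whence $\mathbb{E}[N]=\sum_{i\ge 1} i\,(1-p)^{i-1}p=\frac{1}{p}$.

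Then I would bound the total running time. Write $T$ for the total number of steps taken by $\mathcal{R}$ on $w$, and let $\ell_i$ be the number of steps consumed in round $i$ (set to $0$ once the machine has halted). Since each executed round uses at most $s(w)$ steps regardless of which measurement outcomes occur inside it, $T=\sum_{i\ge 1}\ell_i \le s(w)\sum_{i\ge 1}\mathbf{1}[\text{round }i\text{ is executed}] = s(w)\,N$. Taking expectations and using the previous paragraph gives $\mathbb{E}[T]\le s(w)\,\mathbb{E}[N]=\frac{1}{p}\,s(w)$, with equality in the worst case, namely when every branch of every round runs for the full $s(w)$ steps; this is the claimed quantity.

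The only delicate point is the justification that the rounds are genuinely independent and identically distributed, which rests on the model-specific fact that a reset/restart erases all prior amplitude information and literally re-prepares $\ket{q_0,0}$, so that the post-restart evolution depends on nothing but $w$. I expect this to be the main thing to state carefully; once it is granted, the geometric-series evaluation and the step-counting inequality are routine.
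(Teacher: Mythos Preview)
Your proposal is correct and follows essentially the same approach as the paper: both recognize that the number of rounds until halting is geometrically distributed with parameter $p$, bound each round's length by $s(w)$, and evaluate $\sum_{i\ge 1} i(1-p)^{i-1}p\, s(w)=\frac{1}{p}s(w)$. The paper simply writes down this sum and computes it in one line, whereas you spell out the i.i.d.\ structure of the rounds and the geometric law for $N$ more explicitly; the content is the same.
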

\begin{proof}
 The worst-case expected running time of $ \mathcal{R} $ on $ w $ is
    \begin{equation}
            \label{equation:expected-runtime-proof}
            \sum_{i=0}^{\infty} (i+1)(1-p)^{i} (p)(s(w)) =
       (p)(s(w))\frac{1}{p^{2}}=\frac{1}{p}(s(w)).
    \end{equation}
\end{proof}

\begin{lemma}
	\label{lemma:restart-time}
	Any one-way automaton with restart with expected runtime 
	$ t $ can be simulated by a corresponding two-way automaton without restart 
	in expected time no more than $ 2t $.
\end{lemma}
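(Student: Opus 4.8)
Given a one-way automaton with restart $\mathcal{R}$, I would construct a two-way automaton without restart that mimics $\mathcal{R}$ step for step, replacing each restart move by a physical return of the tape head to the left end-marker $\cent$ followed by re-entry into the start configuration. Concretely, the simulator keeps all of $\mathcal{R}$'s states and adds one dedicated ``rewind'' state (in the quantum case, one rewind state per restart state, so that the redirection below is norm-preserving): the transitions by which $\mathcal{R}$ would enter a restart state are redirected so that the simulator instead enters a rewind state and moves one cell left; in a rewind state it moves left on every tape symbol other than $\cent$; and on reading $\cent$ it switches back to $q_{0}$ at cell $0$ and proceeds exactly as $\mathcal{R}$ would at the start of a fresh round.

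For a 1PFA$^{\circlearrowleft}$ simulated by a 2PFA this already works: the rewind phase is deterministic, and the simulator's per-round halting probability and its accept/reject behaviour coincide with $\mathcal{R}$'s. In the quantum case the construction is the same, but one must verify that the new transitions still induce a unitary operator on $\ell_{2}(Q\times\mathbb{Z}_{|\mathsf{w}|})$, i.e.\ that the symbol operators $\mathsf{U}_{\sigma}$ stay unitary. Two observations should make this go through. First, $\mathcal{R}$ never applies $\mathsf{U}_{\sigma}$ to a restart state, since such a state is measured out as soon as it is entered; hence the columns of $\mathsf{U}_{\sigma}$ indexed by restart states are unconstrained by $\mathcal{R}$'s behaviour and may be chosen to realise the leftward sweep. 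Second (and this is the delicate point), re-initialising to the fixed configuration $\ket{q_{0},0}$ is not reversible, and the reason $\mathcal{R}$ is permitted to do it is precisely the measurement that accompanies its restart; so in the simulator the re-initialisation at $\cent$ must likewise be routed through a measurement (a refinement of the accept/reject/continue observable) rather than through a bare unitary that would have to collapse several rewind configurations onto the single configuration $\ket{q_{0},0}$. (If the two-way model is allowed a classical head and unrestricted register measurements, this step causes no trouble.) Arranging the observable and the accompanying start transition so that exactly $\ket{q_{0},0}$ survives, while respecting the rule that all transitions into a given state move the head in the same direction, is the step I expect to be the real work; the classical case and the timing estimate are routine by comparison.

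For the running time, the factor $2$ is forced branch by branch. Since $\mathcal{R}$ is one-way, after $k$ steps of a round its head sits on cell $k$ (or on $\dollar$), so moving it back to $\cent$ and re-initialising costs at most $k$ further steps; thus a round of $\mathcal{R}$ of length $k$ is replaced by at most $2k$ steps of the simulator, and no rounds are created or destroyed. Writing $p=p_{halt}(\mathcal{R},w)$ and letting $s(w)$ be the maximum length of a round of $\mathcal{R}$ on $w$, the simulator runs, with probability $(1-p)^{i}p$, through exactly $i+1$ rounds of total length at most $(i+1)\cdot 2s(w)$, so its worst-case expected running time is at most $\sum_{i\ge 0}(i+1)(1-p)^{i}p\cdot 2s(w)=\tfrac{2}{p}s(w)=2t$, the last equality using Lemma~\ref{lemma:expected-runtime}, which gives $t=\tfrac{1}{p}s(w)$.
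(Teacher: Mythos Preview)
Your proposal is correct and follows essentially the same approach as the paper: replace each restart by a step-by-step leftward walk to $\cent$, then observe that each round at most doubles in length, so the sum in Lemma~\ref{lemma:expected-runtime} at most doubles. The paper's proof is a two-sentence sketch; it notes only that the $i$ \emph{nonhalting} rounds double (the final halting round needs no rewind), which gives the slightly tighter $\tfrac{2-p}{p}\,s(w)\le 2t$, whereas you bound every round by $2s(w)$ and obtain $\tfrac{2}{p}\,s(w)=2t$ directly. Both arguments establish the stated bound. The unitarity and measurement issues you raise for the quantum case are not addressed in the paper's proof at all; you go further than the paper in flagging them, and your caveats there are appropriate rather than a defect in your argument.
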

\begin{proof}
The program of the two-way machine ($ \mathcal{R}_{2} $) is identical
to that of the one-way machine with restart ($ \mathcal{R}_{1} $),
except for the fact that each restart move of $ \mathcal{R}_{1} $ is
imitated by $ \mathcal{R}_{2} $ by moving the head one square per step
all the way to the left end-marker. This causes the runtimes of the
$i$ nonhalting rounds in the summation in Equation
(\ref{equation:expected-runtime-proof})
in Lemma \ref{lemma:expected-runtime} to increase by a factor of 2.
\end{proof}

We will now give a quick review of the technique of probability
amplification. Suppose that we are given a machine (with or without reset) $
\mathcal{A} $,
which recognizes a language $L$ with error bounded by $\epsilon$, and
we wish to construct another machine which recognizes $L$ with a much
smaller, but still positive, probability of error, say, $ \epsilon^{'}
$. It is well known\footnote{See, for instance, pages 369-370 of
\cite{Si06}.} that one can achieve this by running $ \mathcal{A} $  $
 O(\log(\frac{1}{\epsilon^{'}})) $ times on the same input, and then
giving the majority answer as our verdict about the membership of the
input string in $L$.

Suppose that the original machine $ \mathcal{A} $ needs
to be run $ 2k+1 $ times for the overall procedure to work with the
desired correctness probability. Two counters can be used to count the
acceptance and rejection responses, and
the overall computation
accepts (rejects) when the number of recorded acceptances (rejections)
reaches $ k+1 $.
To implement these counters in the finite automaton setting, we need
to ``connect" $ (k+1)^{2} $ copies of $ \mathcal{A} $, $ \{
\mathcal{A}_{i,j} \mid 0 \le i,j \le k \} $,
where the subscripts indicate the values of the two counters,
i.e., the states of $ \mathcal{A}_{i,j} $ encode the information that
$ \mathcal{A} $
has accepted $ i $ times and rejected $ j $ times in its previous
runs. The new machine $ \mathcal{M} $ is constructed
from the $ \mathcal{A}_{i,j} $ as follows:
\begin{itemize}
 \item The start state of $ \mathcal{M} $ is the start state of $
\mathcal{A}_{0,0} $;
 \item Upon reaching any accept state of $ \mathcal{A}_{i,j} $ ($
0 \le i,j < k $), $ \mathcal{M} $ moves the head back to the left
end-marker and then switches to the start state of $
\mathcal{A}_{i+1,j} $;
    \item Upon reaching any reject states of $ \mathcal{A}_{i,j} $ ($ 0
\le i,j < k $), $ \mathcal{M} $ moves the head back to the left
end-marker and then switches to the start state of $
\mathcal{A}_{i,j+1} $;
 \item The accept states of $ \mathcal{M} $ are the accept states
of $ \mathcal{A}_{k,j} $ ($ 0 \le j < k $);
 \item The reject states of $ \mathcal{M} $ are the reject states
of $ \mathcal{A}_{i,k} $ ($ 0 \le i < k $).
\end{itemize}
\begin{lemma}
 \label{lemma:restart-to-reset}
 If language $ L \subseteq \Sigma^{*} $ is recognized by
$\mathcal{R} $ with a fixed error
 bound $ \epsilon > 0 $, then for any positive error bound $
\epsilon^{'} < \epsilon$, there exists a finite automaton with reset,
 $ \mathcal{R}^{'} $, recognizing $ L $ .
 Moreover, if $ \mathcal{R} $ has $ n $ states and its (expected)
runtime is $ O(s(|w|)) $,
 then $ \mathcal{R}^{'} $ has $
O(\log^{2}(\frac{1}{\epsilon^{'}})n) $ states, and
 its (expected) runtime is $ O(\log(\frac{1}{\epsilon^{'}})s(|w|)) $,
 where $ w $ is the input string.
\end{lemma}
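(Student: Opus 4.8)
The plan is to take the machine $\mathcal{M}$ with $(k+1)^2$ connected copies of $\mathcal{A}$ that was just described immediately before the lemma statement, and verify that choosing $k = \Theta(\log(1/\epsilon'))$ makes it recognize $L$ with error at most $\epsilon'$, while tracking its state count and runtime. The construction is already on the table; what remains is the analysis. Concretely, I would first fix the relationship between $k$ and $\epsilon'$ using the standard Chernoff-type majority-vote bound referenced in the footnote (pages 369--370 of \cite{Si06}): since each run of $\mathcal{A}$ errs with probability at most $\epsilon < \frac12$, running $\mathcal{A}$ independently $2k+1$ times and taking the majority answer errs with probability at most $\epsilon'$ provided $2k+1 = c\log(1/\epsilon')$ for a constant $c$ depending only on $\epsilon$; hence $k = \Theta(\log(1/\epsilon'))$.

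Next I would argue that $\mathcal{M}$ faithfully implements this majority vote. The key observation is that $\mathcal{M}$ never restarts or resets on its own during a run of one $\mathcal{A}_{i,j}$ — it only performs a left-reset move when some copy $\mathcal{A}_{i,j}$ halts (accepting or rejecting), at which point it increments the appropriate counter and launches $\mathcal{A}_{i+1,j}$ or $\mathcal{A}_{i,j+1}$ from the left end-marker. Because the copies are launched from the start state on a fresh pass over the input, each invocation of $\mathcal{A}$ inside $\mathcal{M}$ produces an outcome with exactly the same distribution as a standalone run of $\mathcal{A}$, independently of the previous outcomes. The reachable subscripts $(i,j)$ with $i,j<k$ form the transient part of a vote that has not yet been decided; the computation is absorbed exactly when either $i$ or $j$ first reaches $k+1$, i.e. when $k+1$ acceptances or $k+1$ rejections have accumulated — which is precisely the majority verdict among the first $2k+1$ runs (the vote is in fact settled as soon as one side reaches $k+1$, so no more than $2k+1$ runs are ever needed). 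Thus $\mathsf{P_{acc}}(\mathcal{M},w)$ equals the probability that the majority of $\mathcal{A}$'s runs accept, which is $\ge 1-\epsilon'$ for $w\in L$, and symmetrically for $w\notin L$.

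Finally I would count resources. There are $(k+1)^2 = \Theta(\log^2(1/\epsilon'))$ copies, each with $n$ states (plus $O(1)$ bookkeeping states per copy for the left-reset move, which is absorbed into the constant), giving $O(\log^2(1/\epsilon')\,n)$ states. For the runtime: along any computation path $\mathcal{M}$ invokes at most $2k+1 = O(\log(1/\epsilon'))$ copies before a decision, and each invocation costs the $O(s(|w|))$ (expected) runtime of $\mathcal{A}$ plus the $O(|w|)$ cost of walking the head back to $\cent$ — which is $O(s(|w|))$ since $s(|w|) = \Omega(|w|)$ for any machine that reads its input — so the total (expected) runtime is $O(\log(1/\epsilon')\,s(|w|))$ by linearity of expectation over the at most $2k+1$ rounds. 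The only subtle point, and the one I would be most careful about, is the independence claim in the second paragraph: I must make sure that when $\mathcal{A}$ is a quantum machine (with or without reset), restarting a copy from its initial state $\ket{q_0,0}$ genuinely re-prepares a pure product state uncorrelated with the earlier history — which holds here precisely because each copy's outcome is extracted by the halting measurement before the next copy begins, so there is no residual entanglement to worry about; this is exactly the mixed-state behaviour noted after Equation~\ref{equation:Epartition}.
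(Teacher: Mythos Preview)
Your proposal is correct and follows exactly the approach the paper intends; the paper's own proof is the single line ``Follows easily from the above description,'' and you have simply spelled out that description together with the standard majority-vote bound and the resource count. One small inaccuracy worth fixing in your write-up: since $\mathcal{R}$ is itself a machine \emph{with restart}, each copy $\mathcal{A}_{i,j}$ does perform reset moves during its run (its own restarts, now redirected to the start state of that copy), and this---not only the inter-copy transitions---is why $\mathcal{R}'$ must be a machine with \emph{reset} rather than merely restart.
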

\begin{proof}
 Follows easily from the above description.
\end{proof}

Finally, we note the following relationship between the
computational powers of the 2QCFA and the 1QFA$^{\curvearrowleft}$.

\begin{lemma}
 \label{lemma:1qfareset-simulatedby-2qcfa}

For any 1QFA$ ^{\curvearrowleft} $ $
\mathcal{M}_{1} $  with $ n $ states and expected runtime
 $ t(|w|) $, there exists a 2QCFA $ \mathcal{M}_{2} $ with $ n $
quantum states, $ O(n) $ classical states,
 and expected runtime $ O(t(|w|)) $, such that $ \mathcal{M}_{2} $
accepts every input string $w$ with the same probability that $
\mathcal{M}_{1} $ accepts $w$.
\end{lemma}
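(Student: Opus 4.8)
The plan is to exploit the fact that a 1QFA$^{\curvearrowleft}$ never superposes its head position: all of its transitions move the head to the right, except that a reset sends it deterministically back to the left end-marker. So $\mathcal{M}_2$ can keep the head position classical, use the same Hilbert space $\ell_2(Q)$ for its quantum register and the same operators $\mathsf{U}_{\sigma}$ for its quantum evolution as $\mathcal{M}_1$, and simulate each step of $\mathcal{M}_1$ by a bounded number of its own steps, handling resets by walking the head back to $\cent$ one cell at a time, exactly as in the proof of Lemma~\ref{lemma:restart-time}.

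Concretely, to simulate one step of $\mathcal{M}_1$ scanning $\sigma$, I would have $\mathcal{M}_2$ first apply $\mathsf{U}_{\sigma}$ to its quantum register ($\mathsf{\Theta}$ a unitary, head stationary), and then, in its next step, apply the orthogonal measurement corresponding to the decomposition $Q = Q_{acc}\cup Q_{rej}\cup Q_{non}\cup\bigcup_i Q^{\curvearrowleft}_{q_i}$ of $\ell_2(Q)$. Splitting the unitary and the measurement into two steps is necessary because a single 2QCFA step performs one or the other but not both; and because the head is classical and fixed at that moment, this measurement on $\ell_2(Q)$ has exactly the effect of $\mathcal{M}_1$'s configuration-space observable $\mathcal{O}$. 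Using its ability to branch on the measurement outcome, $\mathcal{M}_2$ then routes via $\delta$: ``accept''/``reject'' lead to a halting classical state; ``continue'' moves the head one cell right and returns to normal simulation mode; ``reset with $q_i$'' enters a rewinding mode that records the index $i$.

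The only non-routine point is that after a reset $\mathcal{M}_1$ throws away the post-measurement vector and resumes from the fixed pure state $\ket{q_i}$, and re-preparing a fixed state from an unknown one is not unitary, so $\mathcal{M}_2$ cannot realize it simply by applying an operator. I would handle this by having $\mathcal{M}_2$, in rewinding mode, (i) move its head left one cell per step until it scans $\cent$; (ii) measure the register in the computational basis, observing some $\ket{q}$ with $q\in Q^{\curvearrowleft}_{q_i}$ and handing $q$ to the classical part; and (iii) apply the (unitary) transposition of $\ket{q}$ and $\ket{q_i}$, which leaves the register in $\ket{q_i}$; then $\mathcal{M}_2$ resumes normal mode at cell $0$. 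Storing the pair $(i,q)$ costs at most $\sum_i|Q^{\curvearrowleft}_{q_i}|\le n$ classical states, so, together with the constantly many control states, $\mathcal{M}_2$ uses $n$ quantum states and $O(n)$ classical states.

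Correctness is an induction on simulated steps: whenever $\mathcal{M}_1$ is about to perform its $m$-th step of a round, $\mathcal{M}_2$ is in the matching configuration (register state, head cell, normal mode) with the same probability; summing the per-round acceptance and rejection probabilities over all rounds — a convergent sum by the no-infinite-loop convention of Section~\ref{subsection:basicfacts} — gives $\mathsf{P_{acc}}(\mathcal{M}_2,w)=\mathsf{P_{acc}}(\mathcal{M}_1,w)$ for every $w$. For the runtime, each ordinary step of $\mathcal{M}_1$ costs $\mathcal{M}_2$ a constant number of steps, and a reset occurring with the head on cell $j$ costs $O(j)$ extra rewinding steps; since within a round $\mathcal{M}_1$'s head only advances rightward from cell $0$, we have $j \le (\text{length of that round})$, so the total rewinding cost is at most a constant times the total number of steps $\mathcal{M}_1$ takes. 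Hence the expected runtime of $\mathcal{M}_2$ is $O(t(|w|))$. The one place demanding care is the classically-controlled re-preparation of $\ket{q_i}$ after a reset; everything else is straightforward step-by-step mimicry.
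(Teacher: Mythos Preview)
Your proof is correct and follows essentially the same approach as the paper. The only organizational difference is that the paper refines the observable in one shot --- replacing each $E_{reset\text{-}i}$ by its one-dimensional pieces $\mbox{span}\{\ket{q_{i_j}}\}$ so that the specific reset state $q_{i_j}$ is learned immediately --- whereas you first perform $\mathcal{M}_1$'s coarse measurement and postpone a computational-basis measurement to after rewinding; both routes lead to the same transposition-based re-preparation of $\ket{q_i}$ and the same $O(n)$ classical-state count.
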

\begin{proof}
We utilize the 2QCFA's ability of making arbitrary orthogonal measurements.
Given a 1QFA$ ^{\curvearrowleft} $ $ \mathcal{M}_{1} $,
we construct a 2QCFA $ \mathcal{M}_{2} $ with the same set of quantum states.
On each tape square, $ \mathcal{M}_{2} $ first performs the unitary
transformation
associated with the current symbol by the  program of $ \mathcal{M}_{1} $.
It then makes a measurement (over the space spanned by the set of
quantum states) using an observable $ \mathcal{O}^{'} $,
which is formed by replacing each subspace of the form $E_{reset-i}$ in
the observable $ \mathcal{O} $ (Equation \ref{equation:Epartition}) of
$ \mathcal{M}_{1} $\footnote{Since the head is
classical, the observable is redefined to be a decomposition of the
space spanned by just the set of states.} by its subspaces
\[ \{ E_{reset-i-q_{i_1}} \oplus E_{reset-i-q_{i_2}} \oplus \cdots
\oplus E_{reset-i-q_{i_m}} \}, \]
where $ \{ q_{i_1}, q_{i_2}, \cdots, q_{i_m} \} = {
Q^{\curvearrowleft}_{q_{i}} } $, and
$ E_{reset-i-q_{i_j}} = \mbox{span} \{\ket{q_{i_j}}\} $.
The outcome associated with $ E_{reset-i-q_{i_j}} $ is simply the name
of $ q_{i_j}$.

$ \mathcal{M}_{2} $  takes the action specified below according to the result of
this observation:
 \begin{enumerate}
       \item ``continue without resetting": move the head one
square to the right,
       \item ``accept": accept,
       \item ``reject": reject,
       \item ``$ q_{i_j}$": enter a classical
state that moves the head left until the
       left end-marker $ \cent $ is seen, and perform a unitary
transformation that transforms the quantum register from state $
q_{i_j}$ to
       $ q_{i} $.
 \end{enumerate}
\end{proof}

\section{Computational power of 1QFA$ ^{\circlearrowleft} $'s} \label{section:1qfa-restart}

In this section, we focus on the 1QFA$ ^{\circlearrowleft} $, which turns out to be the simplest and 
most restricted known model of 
quantum computation that is strictly superior in terms of bounded-error language recognition to its classical 
counterpart.

Our first result shows that 1QFA$ ^{\circlearrowleft} $'s can
simulate any 1PFA$ ^{\circlearrowleft} $ with small state cost, albeit with great slowdown.
Note that no such relation is known between the 2KWQFA and its classical counterpart, the 2PFA.

\begin{theorem}
	\label{theorem:pfa-restart-simulated-by-qfa-restart}
	Any language $ L \subseteq \Sigma^{*} $ recognized by an $ n $-state 1PFA$ ^{\circlearrowleft} $
	with error bound $ \epsilon $ can be recognized by a $ 2n+4 $-state 1QFA$ ^{\circlearrowleft} $
	with the same error bound. 
	Moreover, if the expected runtime of the 1PFA$ ^{\circlearrowleft} $ is $ O(s(|w|)) $,
	then the expected runtime of the 1QFA$  ^{\circlearrowleft} $ is $ O(l^{2|w|}s^{2}(|w|)) $
	for a constant $ l>1 $ depending on $n$, where $ w $ is the input string.
\end{theorem}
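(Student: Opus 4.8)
The plan is to simulate the $n$-state 1PFA$^{\circlearrowleft}$ $\mathcal{P}$ by a 1QFA$^{\circlearrowleft}$ $\mathcal{Q}$ that, within each round, reads the input once while keeping track of $\mathcal{P}$'s current state, and then at the right end-marker samples $\mathcal{P}$'s halting/restarting behaviour. The classical idea is the standard one for turning a rational (or, via a limiting argument, arbitrary) stochastic process into a unitary one: real probabilistic branching cannot be done unitarily in a single step, so instead I would simulate each step of $\mathcal{P}$ by a short unitary "rotation plus measurement" gadget that, with some probability $p$, correctly advances one step and, with the complementary probability $1-p$, signals failure and forces a restart of the whole round. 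Because the failure outcome just restarts, by Lemma~\ref{lemma:acc-rej} and Lemma~\ref{lemma:prej-over-pacc} the ratio $p_{rej}/p_{acc}$ (resp. $p_{acc}/p_{rej}$) of $\mathcal{Q}$ in a round equals that of $\mathcal{P}$, so $\mathcal{Q}$ recognizes $L$ with exactly the same error bound $\epsilon$; only the expected runtime degrades.

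Concretely, I would have $\mathcal{Q}$'s states consist of $n$ "tracking" states $\{r_0,\dots,r_{n-1}\}$ mirroring $\mathcal{P}$'s states, a second copy of $n$ "work" states used inside the per-step gadget, and four extra states for the accept, reject, restart, and a "continue" buffer (giving $2n+4$). On reading a symbol $\sigma$ while tracking $\mathcal{P}$-state $r_i$, $\mathcal{Q}$ performs a unitary that spreads the amplitude onto the work states according to the transition probabilities out of $r_i$ on $\sigma$: e.g. if $\mathcal{P}$ goes from $q_i$ to $q_j$ with probability $p_{ij}$, the unitary should produce amplitude $\sqrt{p_{ij}}$ on a work state encoding $q_j$, together with amplitude on a distinguished "abort" work state so that the column is a unit vector. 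A measurement then either reports "abort" (outcome mapped to the restart state, since this is a 1QFA$^{\circlearrowleft}$) or returns a specific $q_j$, after which $\mathcal{Q}$ moves the head right and resumes tracking in state $r_j$. Handling $\mathcal{P}$'s own accept/reject/restart transitions at the end-marker is done the same way. To keep each step's unitary genuinely unitary one typically needs the amplitudes $\sqrt{p_{ij}}$ to be "nice"; for a rational 1PFA$^{\circlearrowleft}$ this is routine (pad the matrix and complete to an orthonormal basis), and since 1PFA$^{\circlearrowleft}$'s recognizing $L$ with a fixed error bound can be taken rational without loss of generality (a standard perturbation argument for bounded-error recognition), this loses nothing.

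For the runtime: let $p$ be the per-step success probability of a gadget; since there are only finitely many transition entries and states, $p$ is bounded below by a constant $1/l$ with $l>1$ depending only on $n$. A round of $\mathcal{P}$ has at most $O(s(|w|))$ steps, hence a round of $\mathcal{Q}$ completes without aborting with probability at least $p^{O(s(|w|))} \ge l^{-O(s(|w|))}$; but crucially the relevant quantity is really the probability of reaching a halting (not merely a non-aborting) configuration, and here one round of the source machine already halts with probability $p_{halt}(\mathcal{P},w)$, which for bounded-error recognition is itself only guaranteed to be at least some constant — so the number of $\mathcal{Q}$-rounds is geometric with parameter $\Theta(l^{-|w|})$ when $s(|w|)=\Theta(|w|)$, and more generally $l^{-\Theta(s(|w|))}$. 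Multiplying the expected number of rounds by the $O(s(|w|))$ cost per round (each gadget is $O(1)$ steps, and $\mathcal{Q}$ scans the input once per round) and invoking Lemma~\ref{lemma:expected-runtime} gives expected runtime $O(l^{2|w|}s^2(|w|))$; the extra factor of $s(|w|)$ beyond the "obvious" one round-length comes from the per-round length, and the squaring of $l$ absorbs both the per-step abort probability and the fact that the exponent is really proportional to the round length $s(|w|)$ rather than $|w|$, which I would state carefully while choosing $l$ generously. The main obstacle is exactly this runtime bookkeeping — making the unitarity of the per-step gadget precise (completing each stochastic column to an orthonormal set of columns over the $2n+4$ states, and checking the end-marker transitions) is mechanical, but getting the stated $l^{2|w|}s^2(|w|)$ bound requires being honest that each simulated step can abort, that the per-round success probability is the product over the whole round, and that the outer geometric series from Lemma~\ref{lemma:expected-runtime} then contributes the inverse of that product.
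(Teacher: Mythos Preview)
Your proposal has a genuine gap at the core of the per-step gadget. In the Kondacs--Watrous measurement used by a 1QFA$^{\circlearrowleft}$, the observable at each step distinguishes only the four subspaces $E_{acc}$, $E_{rej}$, $E_{non}$ and $E_{restart}$; it never collapses the state to an individual basis vector inside $E_{non}$. Hence your gadget cannot ``return a specific $q_j$'': if the tracking states $r_j$ are all non-halting, the post-measurement state is the full (normalized) superposition $\sum_j \sqrt{p_{ij}}\,|r_j\rangle$, and after the next symbol the amplitude on $r_k$ becomes $\sum_j \sqrt{p_{ij}\,p'_{jk}}$ rather than $\sqrt{\sum_j p_{ij}p'_{jk}}$, so the simulation no longer tracks $\mathcal{P}$'s distribution. (Your unitarity remark is also off: with entries $\sqrt{p_{ij}}$ each column is already a unit vector, so an ``abort'' coordinate buys nothing for normalization; the real obstruction is that the $n$ columns are not mutually orthogonal, and a single extra coordinate cannot repair that.) The rationality digression is therefore a red herring.

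The paper's construction sidesteps both issues by putting the \emph{probabilities themselves}, not their square roots, into the amplitudes. It first defers all halting of $\mathcal{P}$ to the symbol $\dollar$ by adding two accumulator states $s_{acc},s_{rej}$ (giving $n+2$ states), and then embeds each $(n{+}2)\times(n{+}2)$ stochastic matrix $A$ as the scaled block $\tfrac{1}{l}A$ inside a $(2n{+}4)\times(2n{+}4)$ unitary, declaring all $n{+}2$ newly added states to be restart states. Because amplitudes now multiply exactly as probabilities do, after one left-to-right pass the amplitudes of $q_{acc}$ and $q_{rej}$ equal $(1/l)^{|w|+2}p_{acc}(\mathcal{P},w)$ and $(1/l)^{|w|+2}p_{rej}(\mathcal{P},w)$; squaring gives
$p_{rej}(\mathcal{M},w)/p_{acc}(\mathcal{M},w)=\bigl(p_{rej}(\mathcal{P},w)/p_{acc}(\mathcal{P},w)\bigr)^{2}\le\bigl(\epsilon/(1-\epsilon)\bigr)^{2}$,
and Lemma~\ref{lemma:prej-over-pacc} yields error bound at most $\epsilon$. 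The runtime bound then follows directly from $p_{halt}(\mathcal{M},w)=l^{-2(|w|+2)}\bigl(p_{acc}^2+p_{rej}^2\bigr)$ together with $(p_{acc}+p_{rej})^2\le 2(p_{acc}^2+p_{rej}^2)$ --- no per-step-abort bookkeeping is needed.
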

\begin{proof}
	Let $ \mathcal{P} $ be an $ n $-state 1PFA$ ^{\circlearrowleft} $ recognizing $ L $ 
	with error bound $ \epsilon $. We will construct a $ 2n+4 $-state 1QFA$ ^{\circlearrowleft} $ 
	$ \mathcal{M} $ recognizing the same language with error bound $ \epsilon^{'} \le \epsilon $.

	By adding two more states, $ s_{acc} $ and $ s_{rej} $, to $ \mathcal{P} $, 
	we obtain a new 1PFA$ ^{\circlearrowleft} $, $ \mathcal{P}^{'} $, 
	where the halting of the computation in each round is postponed to the last symbol, $ \dollar $, 
	on which the overall accepting and rejecting probabilities are summed up into 
	$ s_{acc} $ and $ s_{rej} $, respectively.
	Therefore, for any given input string $ w \in \Sigma^{*} $, 
	the value of $ s_{acc} $ and $ s_{rej} $ are $ p_{acc}(\mathcal{P},w) $ and $ p_{rej}(\mathcal{P},w) $,
	respectively, at the end of the first round.

	By using the method described in \cite{YS09D}, each stochastic matrix can be converted to a unitary one
	with twice the size as shown in the template
	\[
		\mathsf{U}=\left(
			\begin{array}{c}
				\frac{1}{l} \left( \mathsf{A} \mid B \right)~
				\\ \hline
				D
			\end{array}
			\right),
	\]
	where $ \mathsf{A} $ is the original stochastic matrix; 
	the columns of $ B $, corresponding to newly added states, are filled 
	in to ensure that each row of $ (A \mid B) $ is pairwise orthogonal
	to the others, and has the same length $ l $, which depends only on the dimension of $ \mathsf{A} $, and
	the entries of $ D $ are then selected to make $ \mathsf{U} $ a unitary matrix.

	Each transition matrix of $ \mathcal{P}^{'} $ can be converted to a $ (2n+4) \times (2n+4) $-dimensional
	unitary matrix according to this template. These are the transition matrices of $ \mathcal{M} $. 
	The state set of $ \mathcal{M} $ can be specified as follows:
	\begin{enumerate}
		\item The states corresponding to $ s_{acc} $ and $ s_{rej} $ are
		the accepting and rejecting states, $ q_{acc} $ and $ q_{rej} $, respectively,
		\item the states corresponding to the non-halting and non-restarting states of 
		$ \mathcal{P}^{'} $ are non-halting and non-restarting states, and,
		\item all remaining states are restarting states.
	\end{enumerate}
	The initial state of $ \mathcal{M} $ is the state corresponding to the initial state of $ \mathcal{P} $.

	When $ \mathcal{M} $ runs on input string $|w|$, the amplitudes of $ q_{acc} $ and $ q_{rej} $, 
	the only halting states of $ \mathcal{M} $, at the end of the first round are
	$ \left( \frac{1}{l} \right)^{|w|+2}p_{acc}(\mathcal{P},w)  $ and
	$ \left( \frac{1}{l} \right)^{|w|+2}p_{rej}(\mathcal{P},w)  $, respectively.
	Therefore, when $ w \in L $,
	\[
		\frac{p_{rej}(\mathcal{M},w)}{p_{acc}(\mathcal{M},w)} =
		\frac{p^{2}_{rej}(\mathcal{P},w)}{p^{2}_{acc}(\mathcal{P},w)}
		\leq
		\frac{\epsilon^{2}}{(1-\epsilon)^{2}},
	\]
	and similarly, when $ w \notin L $,
	\[
		\frac{p_{acc}(\mathcal{M},w)}{p_{rej}(\mathcal{M},w)} =
		\frac{p^{2}_{acc}(\mathcal{P},w)}{p^{2}_{rej}(\mathcal{P},w)}
		\leq
		\frac{\epsilon^{2}}{(1-\epsilon)^{2}}.
	\]
	By solving the equation
	\[ \frac{\epsilon^{'}}{1-\epsilon^{'}} = \frac{\epsilon^{2}}{(1-\epsilon)^{2}}, \]
	we obtain
	\[ \epsilon^{'}=\frac{\epsilon^{2}}{1 - 2\epsilon + 2\epsilon^{2}} \leq \epsilon. \]

	The expected runtime of $ \mathcal{P} $ is 
	\[ \frac{1}{p_{acc}(\mathcal{P},w)+p_{rej}(\mathcal{P},w)} \in O(s(|w|)), \]
	and so the expected runtime of $ \mathcal{M} $ is
	\[ 
		\left( l \right)^{2|w|+4} \frac{1}{p^{2}_{acc}(\mathcal{P},w)+p^{2}_{rej}(\mathcal{P},w)} < 3 
		\left( l \right)^{2|w|+4} \left( \frac{1}{p_{acc}(\mathcal{P},w)+p_{rej}(\mathcal{P},w)} \right)^{2}
		\in O(l^{2|w|}s^{2}(|w|)).
	\]
\end{proof}

\begin{corollary}
     1QFA$  ^{\circlearrowleft} $'s can recognize all regular languages with zero error.
\end{corollary}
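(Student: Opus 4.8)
The plan is to obtain this as an immediate consequence of Theorem~\ref{theorem:pfa-restart-simulated-by-qfa-restart}, applied in the degenerate case $\epsilon = 0$. First I would recall that every regular language $L$ is recognized by some deterministic finite automaton $\mathcal{D}$; since $\mathcal{D}$ scans its input left to right and its $0/1$-valued transition matrices are in particular stochastic, $\mathcal{D}$ is a special case of a 1PFA, and hence of a 1PFA$^{\circlearrowleft}$ --- namely one that never exercises its restart option. On every input $w$, this machine runs for a single round, halting deterministically after reading $\cent w \dollar$, so the finiteness-of-rounds assumption of Section~\ref{subsection:basicfacts} is trivially satisfied, and we have $p_{acc}(\mathcal{D},w) = 1$, $p_{rej}(\mathcal{D},w) = 0$ for $w \in L$, and the reverse for $w \notin L$. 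Thus $\mathcal{D}$ recognizes $L$ with error bound $\epsilon = 0$.

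Next I would feed this $n$-state 1PFA$^{\circlearrowleft}$ into the construction of Theorem~\ref{theorem:pfa-restart-simulated-by-qfa-restart}. Inspecting that proof with $\epsilon = 0$: no division by zero occurs, since the ratio being bounded is $p_{rej}/p_{acc} = 0/1 = 0$ on members of $L$ (and symmetrically off $L$); the amplification formula $\epsilon^{'} = \epsilon^{2}/(1 - 2\epsilon + 2\epsilon^{2})$ evaluates to $0$; and Lemma~\ref{lemma:prej-over-pacc}, whose hypothesis admits $\epsilon \in [0,\frac{1}{2})$, then guarantees that the resulting $(2n+4)$-state 1QFA$^{\circlearrowleft}$ accepts every $w \in L$ with probability $1$ and rejects every $w \notin L$ with probability $1$. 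This is exactly zero-error recognition of $L$, and since $L$ was an arbitrary regular language, the corollary follows.

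There is essentially no obstacle here; the only point meriting a sentence of care is confirming that the quantities appearing in the cited theorem and lemmas remain well defined at $\epsilon = 0$ --- in particular that $p_{acc}$ is bounded away from $0$ on $L$ (and $p_{rej}$ on its complement), which holds because a DFA accepts, respectively rejects, with certainty, so the denominators in the expressions for $\mathsf{P_{acc}}$ and $\mathsf{P_{rej}}$ in Lemma~\ref{lemma:acc-rej} are nonzero and the whole chain of estimates in the proof of Theorem~\ref{theorem:pfa-restart-simulated-by-qfa-restart} goes through unchanged.
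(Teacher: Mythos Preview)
Your proposal is correct and matches the paper's intended derivation: the corollary is placed immediately after Theorem~\ref{theorem:pfa-restart-simulated-by-qfa-restart} precisely because a DFA is a zero-error 1PFA$^{\circlearrowleft}$, and feeding it through that construction yields a zero-error 1QFA$^{\circlearrowleft}$. One small imprecision: Lemma~\ref{lemma:prej-over-pacc} is stated for $\epsilon > 0$, not $\epsilon \in [0,\tfrac{1}{2})$, but you do not actually need it---once $p_{rej}(\mathcal{M},w)=0$ and $p_{acc}(\mathcal{M},w)>0$, Lemma~\ref{lemma:acc-rej} alone gives $\mathsf{P_{acc}}(\mathcal{M},w)=1$ directly.
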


To establish the strict superiority of 1QFA$ ^{\circlearrowleft}$'s over 1PFA$ ^{\circlearrowleft} $'s, 
we will make use of the following concepts.

An automaton $ \mathcal{M} $ is said to recognize a language $L$ with
\textit{positive one-sided unbounded error} if every input string $ w \in L $
is accepted by $ \mathcal{M} $ with nonzero
probability, and every $ w \notin L $ is rejected by $ \mathcal{M} $
with probability 1.
An automaton $ \mathcal{M} $ is said
to recognize a language $L$ with \textit{negative one-sided unbounded error}
if every input string $ w \in L $ is accepted by $ \mathcal{M} $
with probability 1,
and every $ w \notin L $ is rejected by $ \mathcal{M} $ with nonzero
probability.

For an automaton $ \mathcal{M} $ recognizing a language $ L $, we define
the \textit{gap function}, $ g_{\mathcal{M}} : N \rightarrow [0, 1] $,
such that $ g_{\mathcal{M}}(n)$ is the
difference between the minimum
acceptance probability of a member of $ L $ with length at most $ n $ and
the maximum acceptance probability of a non-member of $ L $ with length
at most $ n $\footnote{The definition of $ g_{\mathcal{M}} $ is
due to Bertoni and Carpentieri \cite{BC01A}, who call it the ``error function."}.

\begin{lemma}
	\label{lemma:kwqfa-to-1qfarestart-exponential}
	If a language $ L $ is recognized by a 1KWQFA $ \mathcal{M} $ with positive (negative)
	one-sided unbounded error such that $ g_{\mathcal{M}}(n) \geq c^{-n} $ for some $ c>1 $,
	then for all $ \epsilon \in (0,\frac{1}{2}) $, $ L $ is recognized by some 
	1QFA$ ^{\circlearrowleft} $ having three more states than $ \mathcal{M} $ with positive (negative) 
	one-sided error $ \epsilon $
	in expected time $ O(\frac{1}{\epsilon}c^{|w|}|w|) $.	
\end{lemma}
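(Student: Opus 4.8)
The plan is to take the given 1KWQFA $\mathcal{M}$, which recognizes $L$ with, say, positive one-sided unbounded error and a gap that shrinks at most exponentially, and turn it into a 1QFA$^{\circlearrowleft}$ by adding a restart branch. Intuitively, each ``round'' of the new machine $\mathcal{M}'$ simulates one run of $\mathcal{M}$; whenever $\mathcal{M}$ would reject, $\mathcal{M}'$ restarts instead (with some probability), so that repeated rounds boost the tiny acceptance probability of members of $L$ to a constant while keeping non-members rejected with probability $1$. First I would describe the construction: $\mathcal{M}'$ has the states of $\mathcal{M}$ plus three new states — a restart state $q_{\mathit{restart}}$, and two auxiliary states used to split the old ``reject'' outcome on the right end-marker $\dollar$ into a genuine-reject part and a restart part in the right proportion. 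Concretely, on reading $\dollar$, after applying $\mathcal{M}$'s final unitary, instead of projecting onto $E_{\mathit{rej}}$ and halting, $\mathcal{M}'$ performs a unitary that rotates each would-be-rejecting amplitude so that a fixed fraction is redirected to $q_{\mathit{restart}}$ and the complementary fraction to $q_{\mathit{rej}}$; ``accept'' is handled exactly as in $\mathcal{M}$.

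Next I would compute the first-round probabilities. Since $\mathcal{M}$ rejects every $w\notin L$ with probability $1$, in each round $\mathcal{M}'$ on such $w$ produces $p_{\mathit{acc}}(\mathcal{M}',w)=0$, so by Lemma~\ref{lemma:acc-rej} the overall acceptance probability is $0$ and $w$ is rejected with probability $1$ — preserving the negative side of one-sidedness. For $w\in L$, let $a=a(w)$ be $\mathcal{M}$'s acceptance probability; the hypothesis $g_{\mathcal{M}}(n)\ge c^{-n}$ together with one-sidedness gives $a\ge c^{-|w|}$. In the split I choose the redirection fraction so that $p_{\mathit{acc}}(\mathcal{M}',w)=a$ and $p_{\mathit{rej}}(\mathcal{M}',w)$ is a fixed small quantity $r$ independent of $w$ (for instance by sending a fixed portion $r$ of the total weight to $q_{\mathit{rej}}$ and the rest, $1-a-r$, to $q_{\mathit{restart}}$; this is consistent since $a+r\le 1$ once $|w|$ is large, and the finitely many short inputs can be absorbed into the state set or handled separately). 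Then $\dfrac{p_{\mathit{rej}}(\mathcal{M}',w)}{p_{\mathit{acc}}(\mathcal{M}',w)}=\dfrac{r}{a}\le r\,c^{|w|}$, which is not bounded — so to get a fixed $\epsilon$ I instead pick $r$ as a function of the desired $\epsilon$ only through the requirement below, and note that what we actually need, by Lemma~\ref{lemma:prej-over-pacc}, is $\dfrac{p_{\mathit{rej}}}{p_{\mathit{acc}}}\le\dfrac{\epsilon}{1-\epsilon}$. The cleanest fix is: make the reject fraction itself scale like the acceptance weight, i.e. send fraction $\frac{\epsilon}{1-\epsilon}$ of the \emph{accepted} weight's worth to $q_{\mathit{rej}}$; but since the machine cannot know $a$, the honest route is to keep a fixed constant reject leak $r$ and accept that the error is then $O(rc^{|w|})$ — which is \emph{not} what the lemma claims. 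I will therefore use the standard device: route \emph{all} of the old reject weight to restart (so $p_{\mathit{rej}}(\mathcal{M}',w)=0$ in a round), giving positive one-sided \emph{bounded} error only in the degenerate sense that members are accepted with probability $1$ in the limit; then to obtain genuine positive one-sided error $\epsilon$, add a single ``coin-flip reject'' branch at the start of each round that halts-and-rejects with a fixed probability $\rho$ and otherwise runs the simulation, and choose $\rho$ so that, after the geometric sum of Lemma~\ref{lemma:acc-rej}, the overall reject probability of a member is exactly $\epsilon$ while non-members are still rejected with probability $1$ (here the ``positive one-sided'' label is on the accept side). This is where the three extra states are spent: restart, coin-reject, and one bookkeeping state.

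Then I would bound the runtime. By Lemma~\ref{lemma:expected-runtime}, the expected runtime is $\frac{1}{p_{\mathit{halt}}(\mathcal{M}',w)}\,s(w)$, where a single round takes $s(w)=O(|w|)$ steps (a 1KWQFA reads the tape once). For $w\in L$ we have $p_{\mathit{halt}}(\mathcal{M}',w)=p_{\mathit{acc}}+p_{\mathit{rej}}\ge a\ge c^{-|w|}$ up to the fixed coin-reject constant, so the expected runtime is $O(c^{|w|}|w|)$; folding in the dependence of the coin probability on $\epsilon$ gives the claimed $O(\frac{1}{\epsilon}c^{|w|}|w|)$. The negative-error case is entirely symmetric, swapping the roles of accept and reject and of ``probability $1$'' and ``nonzero probability''. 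The main obstacle, as the discussion above shows, is arranging the split at $\dollar$ so that the per-round reject-to-accept ratio — which a priori degrades like $c^{|w|}$ because $a$ is exponentially small — is controlled by a \emph{fixed} $\epsilon$ rather than a vanishing one; the resolution is that for one-sided error we only need one side perfect and the other side boosted by the restart mechanism, so we leak \emph{no} reject weight in the simulation itself and instead inject the tolerated error through a length-independent coin flip, after which Lemma~\ref{lemma:acc-rej}'s geometric amplification does the rest. I would also need to double-check that adding the coin-reject branch and the $q_{\mathit{restart}}$-redirection can be realized by genuinely unitary operators on $\ell_2(Q')$ with $|Q'|=|Q|+3$ — which follows from the same ``complete the rows to an orthonormal basis'' padding argument used in Theorem~\ref{theorem:pfa-restart-simulated-by-qfa-restart}, applied only to the $\dollar$ and $\cent$ operators.
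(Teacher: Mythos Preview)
Your construction has a real gap. The ``length-independent coin flip'' that rejects with fixed probability $\rho$ at the start of each round does not give bounded error on members of $L$. By Lemma~\ref{lemma:acc-rej}, the overall rejection probability of $w\in L$ is
\[
\frac{p_{\mathit{rej}}(\mathcal{M}',w)}{p_{\mathit{acc}}(\mathcal{M}',w)+p_{\mathit{rej}}(\mathcal{M}',w)}
=\frac{\rho}{\rho+(1-\rho)\,a(w)},
\]
where $a(w)\ge c^{-|w|}$ is the acceptance probability of the simulated $\mathcal{M}$. Since $a(w)$ may be as small as $c^{-|w|}$, this quotient tends to $1$ as $|w|\to\infty$ for every fixed $\rho>0$. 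The geometric sum of Lemma~\ref{lemma:acc-rej} does nothing beyond reproducing the per-round ratio $p_{\mathit{rej}}/p_{\mathit{acc}}$, and that ratio is $\Theta(\rho\,c^{|w|})$. You spotted exactly this obstacle a few lines earlier (``$O(rc^{|w|})$ --- which is \emph{not} what the lemma claims'') and then reintroduced it under a different name. Your runtime paragraph is also inconsistent with your own construction: with a fixed coin $\rho$ you would have $p_{\mathit{halt}}\ge\rho$, hence expected time $O(|w|/\rho)$, not $O(c^{|w|}|w|)$.

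The missing idea is to make $p_{\mathit{rej}}$ itself shrink like $c^{-|w|}$, so that the ratio is uniformly bounded. You dismissed this because the machine ``cannot know $a$'', but it does not need to: it only needs the \emph{lower bound} $c^{-|w|}$, which it can manufacture online by attenuating a fixed factor per input symbol. The paper's construction splits into two equiprobable paths on $\cent$. Path~1 simulates $\mathcal{M}$ with every reject state replaced by a restart state, so it contributes $p_{\mathit{acc}}\ge\tfrac{1}{2}c^{-|w|}$ on members, $0$ on non-members, and never rejects. Path~2 carries a single non-halting state that, on each input symbol, continues with amplitude $1/\sqrt{c}$ and restarts with the remaining probability; on $\dollar$ it rejects with amplitude $\sqrt{\epsilon}$ and restarts otherwise. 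This gives $p_{\mathit{rej}}=\tfrac{\epsilon}{2}c^{-|w|}$ for every input, so on members the ratio is at most $\epsilon$ uniformly in $|w|$, while non-members still have $p_{\mathit{acc}}=0$ and are rejected with probability~$1$. The three extra states are path~2's non-halting state, a reject state, and a restart state; the runtime bound follows from $p_{\mathit{halt}}\ge\tfrac{\epsilon}{2}c^{-|w|}$ and Lemma~\ref{lemma:expected-runtime}.
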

\begin{proof}
	We consider the case of positive one-sided error. The adaptation to the other case is trivial.
	$ \mathcal{M} $ is converted into a 1QFA$ ^{\circlearrowleft} $ $\mathcal{M}^{'}_{\epsilon} $
	as follows.
	$ \mathcal{M}^{'}_{\epsilon} $ starts by branching to two equiprobable paths,
	$ \mathsf{path_{1}} $ and $ \mathsf{path_{2}} $, at the beginning of the computation.
	$ \mathsf{path_{1}} $ imitates the computation of $ \mathcal{M} $, 
	except that all reject states that appear in its subpaths are replaced
	by restart states. Regardless of the form of the input, $ \mathsf{path_{2}} $ moves right with amplitude $
	\frac{1}{\sqrt{c}} $, (and so restarts the computation with the remaining probability,) on every input symbol. 
	When it arrives at the right end-marker, $ \mathsf{path_{2}} $ rejects with amplitude $ \sqrt{\epsilon} $, and
	restarts the computation with amplitude $ \sqrt{1-\epsilon} $. 	

	When $ w \notin L $,
	\[
		p_{acc}(\mathcal{M}^{'}_{\epsilon},w) = 0,
		\mbox{ and }
		p_{rej}(\mathcal{M}^{'}_{\epsilon},w) = \frac{\epsilon}{2c^{|w|}},
    \]
	and so the input is rejected with probability 1.
	When $ w \in L $,
	\[
		p_{acc}(\mathcal{M}^{'}_{\epsilon},w) \ge \frac{1}{2c^{|w|}},
		\mbox{ and }
		p_{rej}(\mathcal{M}^{'}_{\epsilon},w) = \frac{\epsilon}{2c^{|w|}},
	\]
	and so the input is accepted with error bound $ \epsilon > 0 $ due to Lemma \ref{lemma:prej-over-pacc}, since
    \[ \frac{p_{rej}(\mathcal{M}^{'}_{\epsilon},w)}{p_{acc}(\mathcal{M}^{'}_{\epsilon},w)} \le \epsilon. \]
    Since $ p_{halt}(\mathcal{M}^{'}_{\epsilon},w) $ is always greater than $ \frac{\epsilon}{2c^{|w|}} $,
    the expected runtime of $ \mathcal{M}^{'}_{\epsilon} $ is $ O(\frac{1}{\epsilon}c^{|w|}|w|) $.
\end{proof}
\begin{lemma}
	\label{lemma:kwqfa-to-1qfarestart-constant}
	If a language $ L $ is recognized by a 1KWQFA $ \mathcal{M} $ with positive (negative)
	one-sided bounded error such that $ g_{\mathcal{M}}(n) \geq c^{-1} $ for some $ c>1 $,
	then for all $ \epsilon \in (0,\frac{1}{2}) $, $ L $ is recognized by some 
	1QFA$ ^{\circlearrowleft} $ having three more states than $ \mathcal{M} $ with positive (negative) 
	one-sided error $ \epsilon $
	in expected time $ O(\frac{1}{\epsilon}c|w|) $.	
\end{lemma}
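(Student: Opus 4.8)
The plan is to mirror the proof of Lemma \ref{lemma:kwqfa-to-1qfarestart-exponential}, with the single change dictated by the hypothesis: here the gap $g_{\mathcal{M}}(n)$ is bounded below by a \emph{constant} $c^{-1}$ rather than by the exponentially decaying $c^{-n}$, so the ``calibration path'' $\mathsf{path_2}$ no longer needs to shrink its surviving amplitude geometrically in $|w|$. First I would take the given 1KWQFA $\mathcal{M}$ and build a 1QFA$^{\circlearrowleft}$ $\mathcal{M}'_\epsilon$ that at the outset splits into two equiprobable branches $\mathsf{path_1}$ and $\mathsf{path_2}$. As before, $\mathsf{path_1}$ runs $\mathcal{M}$ verbatim except that every reject state is converted into a restart state, so that $\mathsf{path_1}$ contributes acceptance probability $\tfrac12 p_{acc}^{\mathcal{M}}(w)$ in the first round, where $p_{acc}^{\mathcal{M}}(w)\ge c^{-1}$ if $w\in L$ and $p_{acc}^{\mathcal{M}}(w)=0$ if $w\notin L$ (using the positive one-sided bounded-error promise). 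The difference is that $\mathsf{path_2}$ now simply walks to the right end-marker deterministically (amplitude $1$ on every input symbol, no restarting en route), and upon reading $\dollar$ rejects with amplitude $\sqrt{\epsilon/c}$ and restarts with the remaining amplitude $\sqrt{1-\epsilon/c}$, so that $\mathsf{path_2}$ contributes rejection probability $\tfrac{1}{2}\cdot\tfrac{\epsilon}{c}=\tfrac{\epsilon}{2c}$ in the first round regardless of the input.

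Next I would do the two case analyses exactly as in the previous lemma. When $w\notin L$: $p_{acc}(\mathcal{M}'_\epsilon,w)=0$ and $p_{rej}(\mathcal{M}'_\epsilon,w)=\tfrac{\epsilon}{2c}>0$, so by Lemma \ref{lemma:acc-rej} the overall rejection probability is $1$. When $w\in L$: $p_{acc}(\mathcal{M}'_\epsilon,w)\ge\tfrac{1}{2c}$ and $p_{rej}(\mathcal{M}'_\epsilon,w)=\tfrac{\epsilon}{2c}$, so $\tfrac{p_{rej}}{p_{acc}}\le\epsilon$, and Lemma \ref{lemma:prej-over-pacc} gives overall acceptance probability at least $1-\epsilon$. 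Thus $L$ is recognized with positive one-sided error $\epsilon$. For the running time, $p_{halt}(\mathcal{M}'_\epsilon,w)\ge\tfrac{\epsilon}{2c}$ always, so by Lemma \ref{lemma:expected-runtime} the expected runtime is $O\!\big(\tfrac{1}{p_{halt}}\,s(w)\big)$; since each round takes $O(|w|)$ steps, this is $O\!\big(\tfrac{1}{\epsilon}c\,|w|\big)$, as claimed. The state count is three more than $\mathcal{M}$: one start state for the initial split and the $\mathsf{path_2}$ machinery needs only a constant number of additional states — indeed a walking state and its rejecting target, matching the ``three more states'' bookkeeping of the exponential-gap lemma (the restart-state relabelling in $\mathsf{path_1}$ costs nothing).

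Honestly, there is no real obstacle here — the statement is the constant-gap analogue of Lemma \ref{lemma:kwqfa-to-1qfarestart-exponential} and the proof is essentially the same, with $c^{-n}$ replaced by $c^{-1}$ throughout. The only point requiring a moment's care is to make sure the amplitudes assigned on $\mathsf{path_2}$ (and the branching amplitude $\tfrac{1}{\sqrt2}$ at the start) can be completed to a syntactically correct — i.e.\ unitary-inducing — transition function on a constant number of extra states; but this is routine, since $\mathsf{path_2}$ is a non-branching right-moving thread until the last symbol and its single $\dollar$-transition is just a $2$-outcome split, both of which are trivially extendible to unitary operators via the same padding idea used elsewhere in the paper. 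One should also note explicitly that the bounded-error hypothesis on $\mathcal{M}$ is what guarantees $p_{acc}^{\mathcal{M}}(w)\ge c^{-1}$ uniformly in $|w|$ for members of $L$ — this is exactly the strengthening over the previous lemma that permits the constant (rather than exponential) factor in the runtime.
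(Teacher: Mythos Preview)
Your proof is correct and follows essentially the same approach as the paper's. The only cosmetic difference is that the paper has $\mathsf{path_2}$ reject \emph{immediately on the left end-marker} rather than walking to the right end-marker first; either way one obtains the constant per-round rejection probability $\frac{\epsilon}{2c}$, and the remaining analysis (the case split, the appeal to Lemma~\ref{lemma:prej-over-pacc}, and the runtime bound via Lemma~\ref{lemma:expected-runtime}) is identical.
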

\begin{proof}
	The construction is almost identical to that in Lemma \ref{lemma:kwqfa-to-1qfarestart-exponential}, 
	except that $ \mathsf{path_{2}} $ rejects with amplitude $ \sqrt{\epsilon} $, and
	restarts the computation with amplitude $ \sqrt{1-\epsilon} $ immediately on the left end-marker,
	 thereby causing every input to be rejected with the constant probability $ \frac{\epsilon}{2c} $.
	Hence, the expected runtime of $ \mathcal{M}^{'}_{\epsilon} $ turns out to be $ O(\frac{1}{\epsilon}c|w|) $.
\end{proof}
Lemma \ref{lemma:kwqfa-to-1qfarestart-exponential} is a useful step towards an eventual
characterization of the class of languages that are recognized with one-sided bounded error by 
1QFA$ ^{\circlearrowleft} $'s, since full classical characterizations are known \cite{YS09C} for the
classes of languages recognized by one-sided unbounded error by several 1QFA models, including the 1KWQFA.

A language $L$ is said to belong to the class S$ ^{=}_{rat} $  \cite{Tu69b,Ma93}
if there exists a rational 1PFA that accepts all and only the members of $ L $ with probability $ \frac{1}{2} $. 
\begin{theorem}
	\label{theorem:1qfa-restart-s=rat}
	For every language $L$ $ \in $ S$ ^{=}_{rat} $, there exists a number $n$ such that for all error bounds 
	$ \epsilon > 0 $, there exist $n$-state 1QFA$ ^{\circlearrowleft} $'s that recognize  
	$L$ and $\overline{L} $ with one-sided error bounded by $ \epsilon $.
\end{theorem}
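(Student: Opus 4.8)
The plan is to reduce the statement, via Lemma~\ref{lemma:kwqfa-to-1qfarestart-exponential}, to constructing a 1KWQFA with a \emph{fixed} number of states that recognizes $\overline{L}$ with positive one-sided unbounded error and whose gap function is bounded below by an inverse exponential. Let $\mathcal{P}$ be a rational 1PFA with $m$ states, all of whose transition entries have a common denominator $K$, which accepts every $w\in L$ with probability exactly $\frac12$ and every $w\notin L$ with probability different from $\frac12$. Exactly as in the proof of Theorem~\ref{theorem:pfa-restart-simulated-by-qfa-restart}, I would first postpone all halting of $\mathcal{P}$ to the right end-marker, obtaining a 1PFA $\mathcal{P}'$ whose state $s_{acc}$ carries probability $p_{acc}(\mathcal{P},w)$ after $ \cent w \dollar $ has been scanned, and then dilate each of its stochastic transition matrices to a $2(m+2)$-dimensional unitary of the template $\frac1l(\mathsf{A}\mid B)$ over $D$ using the method of \cite{YS09D}, where $l>1$ depends only on $m$.

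Next I would assemble a 1KWQFA $\mathcal{M}$ that runs two right-moving branches, entered with amplitude $\frac1{\sqrt2}$ each on the left end-marker. The first branch applies the dilated unitaries, so that after scanning $ \cent w \dollar $ it places amplitude $\frac1{\sqrt2}\,l^{-(|w|+2)}p_{acc}(\mathcal{P},w)$ on a designated state $q_{acc}$, precisely as in the proof of Theorem~\ref{theorem:pfa-restart-simulated-by-qfa-restart}. The second branch is a fixed, $O(1)$-state gadget which on each of the $|w|+2$ tape symbols keeps the fraction $\frac1l$ of its surviving amplitude alive (parking the rest in inert states), and which on $\dollar$ multiplies the surviving amplitude by $-\frac12$ and routes it to the same state $q_{acc}$, thereby contributing amplitude $-\frac1{\sqrt2}\,l^{-(|w|+2)}\cdot\frac12$ there. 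Following Theorem~\ref{theorem:pfa-restart-simulated-by-qfa-restart}, all halting is postponed to $\dollar$, so the intermediate measurements are trivial and cause no renormalization; at $\dollar$ the two branches meet in the same configuration, $q_{acc}$ is declared accepting and every other state rejecting, and one computes \[ p_{acc}(\mathcal{M},w)=\tfrac12\,l^{-2(|w|+2)}\Bigl(p_{acc}(\mathcal{P},w)-\tfrac12\Bigr)^2 , \] which is $0$ exactly when $w\in L$. Hence $\mathcal{M}$ rejects every member of $L$ with probability $1$; and when $w\notin L$, the rationality of $\mathcal{P}$ forces $p_{acc}(\mathcal{P},w)$ to be a multiple of $K^{-(|w|+2)}$ different from $\frac12$, so $|p_{acc}(\mathcal{P},w)-\frac12|\ge\frac12K^{-(|w|+2)}$ and $p_{acc}(\mathcal{M},w)\ge c^{-|w|}$ for a suitable constant $c>1$ depending only on $l$ and $K$. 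Thus $\mathcal{M}$ recognizes $\overline{L}$ with positive one-sided unbounded error and satisfies $g_{\mathcal{M}}(n)\ge c^{-n}$.

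To finish, I would apply Lemma~\ref{lemma:kwqfa-to-1qfarestart-exponential} to $\mathcal{M}$: for every $\epsilon>0$ it yields a 1QFA$^{\circlearrowleft}$ with three more states than $\mathcal{M}$ recognizing $\overline{L}$ with positive one-sided error $\epsilon$, and exchanging its accepting and rejecting states gives, with no change in the state count, a 1QFA$^{\circlearrowleft}$ recognizing $L$ with negative one-sided error $\epsilon$. Since $\mathcal{M}$ has $O(m)$ states, a quantity depending only on $L$ and not on $\epsilon$, we may take $n$ to be that quantity plus $3$. The main obstacle I anticipate is the bookkeeping inside $\mathcal{M}$: making the attenuating branch track the scaling factor $l^{-(|w|+2)}$ of the simulating branch symbol for symbol, carrying the exact coefficient $-\frac12$ with the correct phase so that the interference produces precisely $p_{acc}(\mathcal{P},w)-\frac12$ (a perfect cancellation is what guarantees probability-$1$ rejection of $L$, hence one-sidedness), and checking that the combined per-symbol operators are genuinely unitary. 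The rationality hypothesis is used only in the elementary denominator bound that upgrades ``acceptance probability $\neq\frac12$'' to a gap of at least $c^{-n}$; the dilation itself is imported wholesale from \cite{YS09D} and the proof of Theorem~\ref{theorem:pfa-restart-simulated-by-qfa-restart}.
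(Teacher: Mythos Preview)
Your approach is essentially the paper's: build a 1KWQFA recognizing $\overline{L}$ with positive one-sided unbounded error whose gap function is at least inverse-exponential, then invoke Lemma~\ref{lemma:kwqfa-to-1qfarestart-exponential} (and swap accept/reject states to obtain the machine for $L$). The paper simply cites Turakainen \cite{Tu69b} for the isolation bound and \cite{YS09C} for the PFA-to-1KWQFA conversion, whereas you spell both out explicitly via the common-denominator argument and the dilation of Theorem~\ref{theorem:pfa-restart-simulated-by-qfa-restart}; this makes your write-up pleasantly self-contained, at the cost of the extra bookkeeping you already anticipate.

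One detail needs correction. In branch~1, the $m{+}2$ ``extra'' states produced by the dilation cannot be left as non-halting states: under the next symbol's $\mathsf{U}_\sigma$ they feed back, via the $\frac{1}{l}B$ block, into the simulating states and would corrupt the amplitude you are tracking. They must be declared reject states so that the intermediate measurement discards them at every step (this is exactly the role played by the restart states in Theorem~\ref{theorem:pfa-restart-simulated-by-qfa-restart}). Hence your sentence ``the intermediate measurements are trivial and cause no renormalization'' is not right---there is positive reject probability at every symbol---but since you only need the amplitude reaching $q_{acc}$ on $\dollar$, your formula for $p_{acc}(\mathcal{M},w)$ and the remainder of the argument are unaffected once this is fixed.
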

\begin{proof}
	For a language  $ L $ in S$ ^{=}_{rat} $, let $ \mathcal{P} $ be the rational 1PFA associated by $ L $ 
	as described above. Turakainen \cite{Tu69b} showed that there exists a constant $ b > 1 $ such that 
	for any string $ w \notin L $,
	the probability that $ \mathcal{P} $ accepts $w$
	cannot be in the interval $ (\frac{1}{2}-b^{-|w|},\frac{1}{2}+b^{-|w|}) $.
	By using the method described in \cite{YS09C}, we can convert $ \mathcal{P} $ to a 1KWQFA $ \mathcal{M} $
	recognizing $ \overline{L} $ with one-sided unbounded error, so that $ \mathcal{M} $
accepts any $ w \in \overline{L} $ with probability greater than $ c^{-|w|} $,  for a constant $ c > b $.
	We can conclude with Lemma \ref{lemma:kwqfa-to-1qfarestart-exponential}.
\end{proof}
S$ ^{=}_{rat} $ contains many well-known languages, such as, 
$ L_{eq} $, 
$ L_{pal} = \{w \mid w = w ^{R} \} $, 
$ L_{twin} =\{ wcw \mid w \in \{a,b\}^{*} \} $,
$ L_{mult}=\{x \# y \# z \mid x,y,z \mbox{ are natural numbers in binary notation and } x \times y = z \} $, 
$ L_{square}= \{a^{n}b^{n^{2}} \mid n > 0 \} $,
$ L_{power} = \{ a^{n}b^{2^{n}} \} $, and
all \textit{polynomial languages},  \cite{Tu82} defined as
\[ \{a_{1}^{n_{1}} \cdots a_{k}^{n_{k}} b_{1}^{p_{1}(n_{1},\cdots,n_{k})} \cdots 
	b_{r}^{p_{r}(n_{1},\cdots,n_{k})} \mid p_{i}(n_{1},\cdots,n_{k}) \ge 0 \}, \]
where $ a_{1}, \cdots, a_{k},b_{1}, \cdots, b_{r} $ are distinct symbols, and 
each $ p_{i} $ is a polynomial with integer coefficients.
Note that Theorem \ref{theorem:1qfa-restart-s=rat} and 
Lemma \ref{lemma:1qfareset-simulatedby-2qcfa} answer a question posed by Ambainis and
Watrous \cite{AW02} about whether $ L_{square} $ and $ L_{power} $ can be recognized
with bounded error by 2QCFA's affirmatively.
\begin{corollary}
      The class of languages recognized by 1QFA$  ^{\circlearrowleft} $'s with bounded error
      properly contains the class of languages recognized by 1PFA$ ^{\circlearrowleft} $'s.
\end{corollary}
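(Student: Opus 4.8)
The plan is to separate the statement into the easy inclusion and the harder strictness, and to route both through results already available. For the inclusion, nothing new is required: Theorem~\ref{theorem:pfa-restart-simulated-by-qfa-restart} says that any $L$ recognized by an $n$-state 1PFA$^{\circlearrowleft}$ with error bound $\epsilon$ is recognized by a $(2n+4)$-state 1QFA$^{\circlearrowleft}$ with the \emph{same} error bound, so every language recognized with bounded error by a 1PFA$^{\circlearrowleft}$ is recognized with bounded error by a 1QFA$^{\circlearrowleft}$. For strictness I would exhibit a concrete witness drawn from S$^{=}_{rat}$, say $L_{eq}$: by Theorem~\ref{theorem:1qfa-restart-s=rat}, for every $\epsilon>0$ there is a fixed-size 1QFA$^{\circlearrowleft}$ recognizing $L_{eq}$ with one-sided error at most $\epsilon$, so $L_{eq}$ sits in the upper class. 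It then remains to show that $L_{eq}$ — which is nonregular — is recognized with bounded error by no 1PFA$^{\circlearrowleft}$; the clean way to get this is the statement that \emph{1PFA$^{\circlearrowleft}$'s recognize only regular languages with bounded error}, which I would invoke from Section~6.

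To prove that statement, the key structural facts are the following. First, within a single round a 1PFA$^{\circlearrowleft}$ performs an ordinary one-way left-to-right probabilistic sweep; by postponing every halting/restart decision to the right end-marker exactly as in the proof of Theorem~\ref{theorem:pfa-restart-simulated-by-qfa-restart}, the first-round quantities $p_{acc}(\cdot,w)$ and $p_{rej}(\cdot,w)$ are each computed by a single fixed substochastic linear representation, i.e.\ they are nonnegative rational (``probabilistic'') formal power series whose values lie in $[0,1]$. Second, by Lemmata~\ref{lemma:acc-rej} and~\ref{lemma:prej-over-pacc}, recognizing a language $L$ with bounded error $\epsilon$ is equivalent to the multiplicative separation $p_{rej}(\cdot,w)/p_{acc}(\cdot,w)\le \epsilon/(1-\epsilon)$ for $w\in L$ and $p_{acc}(\cdot,w)/p_{rej}(\cdot,w)\le \epsilon/(1-\epsilon)$ for $w\notin L$. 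So the task reduces to showing that no separation of this multiplicative kind between two \emph{probabilistic} rational power series can carve out a nonregular set: the restart (i.e.\ amplification) mechanism cannot manufacture a constant gap when the underlying one-way computation can only supply one that decays like $b^{-|w|}$ — which is precisely the phenomenon behind the Turakainen bound quoted in the proof of Theorem~\ref{theorem:1qfa-restart-s=rat}.

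The main obstacle is exactly this last step, namely ruling out super-regular bounded-error recognition by exploiting the \emph{nonnegativity} of the transition matrices (substochasticity), not merely their rationality — since, as is well known, positive supports of arbitrary rational series need not be regular, so one genuinely has to use the probabilistic constraint. Granting the Section~6 characterization, the corollary then follows: $L_{eq}$ is nonregular, hence not in the lower class, yet it is in the upper class, so the containment given by Theorem~\ref{theorem:pfa-restart-simulated-by-qfa-restart} is proper. (The same conclusion would follow from any other nonregular member of S$^{=}_{rat}$, e.g.\ $L_{pal}$, $L_{square}$, or $L_{power}$.)
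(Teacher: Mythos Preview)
Your inclusion argument via Theorem~\ref{theorem:pfa-restart-simulated-by-qfa-restart} is fine, and the use of Theorem~\ref{theorem:1qfa-restart-s=rat} to place a witness language into the 1QFA$^{\circlearrowleft}$ class is the right move. The problem is the strictness argument: your chosen witness $L_{eq}$ does not work, and the lemma you are trying to invoke from Section~\ref{section:1pfaR-vs-2pfa} says the \emph{opposite} of what you claim. That section exhibits, for every $\epsilon>0$, a fixed-size 1PFA$^{\circlearrowleft}$ recognizing the nonregular language $L_{eq}$ with error bound $\epsilon$ (in exponential expected time). So the assertion that ``1PFA$^{\circlearrowleft}$'s recognize only regular languages with bounded error'' is simply false, and $L_{eq}$ lies in \emph{both} classes. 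Your sketch of why such a regularity theorem should hold breaks down precisely because the restart mechanism \emph{does} let one amplify a decaying gap into a constant one: the multiplicative separation in Lemma~\ref{lemma:prej-over-pacc} concerns the ratio $p_{acc}/p_{rej}$, and one can arrange for both quantities to decay at comparable exponential rates while keeping their ratio bounded away from $1$.

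The paper's route is to take $L_{pal}$ as the separating language. It is in S$^{=}_{rat}$, so Theorem~\ref{theorem:1qfa-restart-s=rat} puts it in the 1QFA$^{\circlearrowleft}$ class. To exclude it from the 1PFA$^{\circlearrowleft}$ class, one does not argue via regularity at all: instead, Lemma~\ref{lemma:restart-time} says that any 1PFA$^{\circlearrowleft}$ is simulated by a 2PFA with the same acceptance probabilities, and the Dwork--Stockmeyer result \cite{DS92} states that no 2PFA recognizes $L_{pal}$ with bounded error. Hence no 1PFA$^{\circlearrowleft}$ does either. Your parenthetical remark that $L_{pal}$ would work ``the same'' is therefore misleading: it works for a completely different reason, and it is the only one of your listed candidates for which a 2PFA lower bound of the required strength is known.
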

\begin{proof}
	This follows from Theorems \ref{theorem:pfa-restart-simulated-by-qfa-restart} and 
	\ref{theorem:1qfa-restart-s=rat}, Lemma
	\ref{lemma:restart-time}, and the fact \cite{DS92} that $ L_{pal} $ cannot be recognized with
	bounded error by 2PFA's.
\end{proof}
Since general 1QFA's are known to be equivalent in language
recognition power to 1PFA's, one has to consider a two-way model to
demonstrate the superiority of quantum computers over classical ones.
The 2QCFA is known \cite{AW02} to be superior to its
classical counterpart, the 2PFA, also by virtue of $ L_{pal} $.
Recall that, by Lemma \ref{lemma:1qfareset-simulatedby-2qcfa}, 2QCFA's
can simulate
1QFA$  ^{\circlearrowleft} $'s easily, and we do not know of a
simulation in the other direction. 

\section{Conciseness of 2QFA's with mixed states and 2PFA's} \label{section:Conciseness}
In this section, we demonstrate several infinite families of regular
languages which can be recognized with some fixed probability greater
than $ \frac{1}{2} $ by
just tuning the transition amplitudes of a 1QFA$ ^{\circlearrowleft} $
with a constant number of states, whereas the sizes of the
corresponding 1QFA's, 1PFA's, and 2NFA's grow without bound. One of
our constructions can be
adapted easily to
show that 1PFA$ ^{\circlearrowleft} $'s, (and, equivalently, 2PFA's),
also possess the same advantage over those machines.

\begin{definition} For an alphabet $ \Sigma $ containing symbol $ a $,
   and $ m \in \mathbb{Z}^{+} $, the family of languages $ A_{m} $
is defined as
   \[ A_{m}=\{ ua \mid u \in \Sigma^{*}, |u| \le m \}. \]
\end{definition}
Note that Ambainis et al. \cite{ANTV02} report that any Nayak
one-way quantum finite automaton\footnote{This is a 1QFA model of
intermediate power, subsuming the 1KWQFA, but strictly weaker
than the  most general models (\cite{Pa00,Ci01},
and one-way versions of 2QCFA's,) which recognize any regular
language with at most the same state cost as the corresponding DFA.}
that recognizes $ A_m $ with some fixed probability greater than $
\frac{1}{2} $ has $ 2^{\Omega(m)} $ states.

\begin{theorem}
	\label{theorem:A_m}
	$ A_{m} $  is recognized by a 6-state 1QFA$ ^{\circlearrowleft} $ $ \mathcal{M}_{m,\epsilon} $
	for any error bound $ \epsilon >0  $. Moreover, the expected runtime of $ \mathcal{M}_{m,\epsilon} $ 
	on input $ w $ is $ O( \left( \frac{1}{\epsilon} \right)^{2m}|w|) $.	
\end{theorem}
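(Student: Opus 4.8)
The plan is to exhibit a fixed 6-state 1QFA$^{\circlearrowleft}$ and show that, by tuning one transition amplitude as a function of $m$ and $\epsilon$, it recognizes $A_m$ with error at most $\epsilon$. Since membership in $A_m$ requires two things — that the last input symbol is $a$, and that $|u|\le m$, i.e.\ the total input length is at most $m+1$ — the natural strategy is to use the restart mechanism to make the acceptance/rejection \emph{ratio} encode the length comparison, exploiting the amplification-by-restart behavior captured in Lemmas~\ref{lemma:acc-rej} and~\ref{lemma:prej-over-pacc}. First I would have the machine read the input deterministically, tracking only one bit: whether the last symbol scanned was $a$. If on the right end-marker this bit is $0$ (last symbol not $a$), the machine rejects outright with probability $1$, so every such input is correctly rejected. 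The interesting case is when the last symbol is $a$.

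In that case the idea is to run, in superposition (a two-way split of amplitude $\tfrac1{\sqrt2}$ each, as in the proofs of Lemmas~\ref{lemma:kwqfa-to-1qfarestart-exponential} and~\ref{lemma:kwqfa-to-1qfarestart-constant}), two subpaths: a ``reject'' subpath and an ``accept'' subpath. The reject subpath, on \emph{every} input symbol (there are $|w|$ of them), multiplies its surviving amplitude by a fixed constant $\cos\theta$ and otherwise restarts; then on the right end-marker it commits to rejecting. This yields $p_{rej}$ proportional to $(\cos^2\theta)^{|w|}$, an exponentially length-sensitive quantity. The accept subpath instead commits to accepting with a fixed amplitude that does \emph{not} depend on $|w|$ — but only does so if it has verified $|w|\le m+1$, which a one-way machine with $O(m)$ states... except we only have $6$ states. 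So the accept subpath must accept unconditionally with a fixed probability $p_0$ (a constant), and we rely purely on the ratio: for $|w|\le m+1$ the reject contribution $(\cos^2\theta)^{|w|}\ge(\cos^2\theta)^{m+1}$ is not too small relative to $p_0$, so $p_{rej}/p_{acc}$ stays bounded, while for $|w|>m+1$ (a non-member ending in $a$) the reject contribution is at least a factor $\cos^2\theta$ \emph{smaller} — and that single-factor gap is what we leverage. Concretely, choosing $\cos^2\theta$ and $p_0$ so that $(\cos^2\theta)^{m+1}/p_0$ and $p_0/(\cos^2\theta)^{m+2}$ are both $\le \epsilon/(1-\epsilon)$, Lemma~\ref{lemma:prej-over-pacc} gives error bound $\epsilon$. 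Balancing these forces $p_0\approx(\cos^2\theta)^{m+3/2}$ and $(\cos^2\theta)^{-1/2}\approx\frac{1-\epsilon}{\epsilon}$, i.e.\ $\cos^2\theta\approx\left(\frac{\epsilon}{1-\epsilon}\right)^2$, from which $p_0$, and hence $p_{halt}$, is $\Theta\!\left(\left(\frac{\epsilon}{1-\epsilon}\right)^{2(m+3/2)}\right)=\Theta\!\left(\left(\frac1\epsilon\right)^{-2m}\right)$; Lemma~\ref{lemma:expected-runtime} then gives expected runtime $\frac{1}{p_{halt}}\cdot O(|w|)=O\!\left(\left(\frac1\epsilon\right)^{2m}|w|\right)$, as claimed. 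I must double-check the $6$-state budget: one state for ``last symbol was not $a$ yet / reading'', one for ``last symbol was $a$'' (these do the one-way tracking), one each for $\mathsf{path_1},\mathsf{path_2}$ of the amplitude split, plus $q_{acc}$ and $q_{rej}$ — and the restart state can be $q_0$ itself; the remaining configurations whose measurement outcome is ``restart'' are absorbed into $Q_{restart}$, so the count is tight at $6$, and I would spell out the $\mathsf{U}_\sigma$ for $\sigma\in\{a\}$, $\sigma\in\Sigma\setminus\{a\}$, $\cent$, $\dollar$ explicitly and check unitarity.

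The main obstacle will be verifying that a single such unitary per symbol genuinely implements ``the reject subpath shrinks by $\cos\theta$ per symbol while the accept subpath is frozen'' without the two subpaths interfering — i.e.\ arranging the $6\times 6$ matrices so that $\mathsf{path_1}$ and $\mathsf{path_2}$ stay in orthogonal state-subspaces throughout a round, so that the measurement after each step cleanly partitions their amplitudes into ``continue'', ``accept'', ``reject'', ``restart'' with no leakage. This is where the restriction that every transition into a given state moves the head in a fixed direction, and the requirement that $\mathsf{U}_\sigma$ be unitary, interact delicately; I expect to spend most of the write-up confirming that the template is consistent for all four symbol types and that the amplitudes collected at $q_{acc}$ and $q_{rej}$ on $\dollar$ are exactly the claimed constant and $(\cos\theta)^{|w|}\sqrt{\epsilon}/\sqrt2$ respectively. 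Everything after that is an application of Lemmas~\ref{lemma:prej-over-pacc} and~\ref{lemma:expected-runtime}.
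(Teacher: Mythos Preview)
Your construction has the roles of accept and reject reversed, and this is fatal rather than a labeling slip. With $p_{rej}\propto(\cos^2\theta)^{|w|}$ and $p_{acc}=p_0$ constant, any non-member of the form $w=a^{m+k}$ ($k\ge 2$) has $p_{acc}/p_{rej}=p_0/(\cos^2\theta)^{m+k}\to\infty$ as $k$ grows, so by Lemma~\ref{lemma:acc-rej} such $w$ is \emph{accepted} with probability approaching $1$. Your constraint $p_0/(\cos^2\theta)^{m+2}\le\epsilon/(1-\epsilon)$ covers only the boundary length $|w|=m+2$; every longer non-member violates it. Symmetrically, your constraint $(\cos^2\theta)^{m+1}/p_0\le\epsilon/(1-\epsilon)$ treats the \emph{longest} member, but the binding case is the \emph{shortest} one, $w=a$: there you would need $\cos^2\theta/p_0\le\epsilon/(1-\epsilon)$, and combining this with the non-member constraint gives $(\cos^2\theta)^{-(m+1)}\le(\epsilon/(1-\epsilon))^2<1$, impossible since the left side exceeds $1$.

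The repair is to interchange the two subpaths: let the \emph{acceptance} probability decay with $|w|$ and fix the \emph{rejection} probability at a constant sitting between the acceptance values at $|w|=m+1$ and $|w|=m+2$. This is exactly what the paper does. Its two non-halting states $q_0,q_1$ simultaneously track whether the last symbol read was $a$ \emph{and} carry an amplitude that shrinks by a factor $\epsilon$ on every input symbol, so that $p_{acc}=\epsilon^{2|w|+2}$ when $w$ ends in $a$ and $p_{acc}=0$ otherwise; a fixed rejection probability $\epsilon^{2m+5}$ is emitted once, on $\cent$. Then the worst member gives $p_{rej}/p_{acc}\le\epsilon$ and the worst non-member ending in $a$ gives $p_{acc}/p_{rej}\le\epsilon$, with longer non-members only making the latter ratio smaller.

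Two secondary points. Your pipeline ``track the last-symbol bit deterministically, \emph{then} split into two subpaths that shrink per symbol'' cannot be realized in a single left-to-right pass of a round; the tracking and the per-symbol amplitude decay must be interleaved, and the paper shows two non-halting states suffice for both at once. And your state budget omits $Q_{restart}$: in the model these states are disjoint from $Q_{non}\cup Q_{acc}\cup Q_{rej}$, so ``the restart state can be $q_0$ itself'' is not permitted. The paper spends two states $I_1,I_2$ on restarts and reaches six total with only two (not four) non-halting states.
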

\begin{proof}
	Let $ \mathcal{M}_{m,\epsilon}=\{Q,\Sigma,\delta,q_{0},Q_{acc},Q_{rej},Q_{restart}\} $ 
	be a 1QFA$ ^{\circlearrowleft} $ with 
	$ Q_{non}=\{ q_{0}, q_{1} \} $,
	$ Q_{acc}=\{A\} $, $ Q_{rej}=\{R\} $, $ Q_{restart}=\{I_{1},I_{2}\} $.
	$ \mathcal{M}_{m,\epsilon}$ contains the transitions
	\begin{eqnarray*}
		\mathsf{U}_{\cent} \ket{q_{0}} & = & \epsilon \ket{q_{1}} +  \epsilon^{\frac{2m+5}{2}} \ket{R} +
			\sqrt{1-\epsilon^{2}-\epsilon^{2m+5}} \ket{I_{1}} \\
		\mathsf{U}_{a} \ket{q_{0}} & = & \epsilon \ket{q_{0}} + \sqrt{\frac{1}{2}-\epsilon^{2}} \ket{I_{1}} 
			+ \frac{1}{\sqrt{2}} \ket{I_{2}} \\
		\mathsf{U}_{a} \ket{q_{1}} & = & \epsilon \ket{q_{0}} + \sqrt{\frac{1}{2}-\epsilon^{2}} \ket{I_{1}} 
			- \frac{1}{\sqrt{2}} \ket{I_{2}} \\
		\mathsf{U}_{\Sigma \setminus \{a\}} \ket{q_{0}} & = & \epsilon \ket{q_{1}} 
			+ \sqrt{\frac{1}{2}-\epsilon^{2}} \ket{I_{1}} + \frac{1}{\sqrt{2}} \ket{I_{2}} \\
		\mathsf{U}_{\Sigma \setminus \{a\}} \ket{q_{1}} & = & \epsilon \ket{q_{1}}
			+ \sqrt{\frac{1}{2}-\epsilon^{2}} \ket{I_{1}}- \frac{1}{\sqrt{2}} \ket{I_{2}} \\
        \mathsf{U}_{\dollar} \ket{q_{0}} & = & \ket{A} \\
		\mathsf{U}_{\dollar} \ket{q_{1}} & = & \ket{R} \\
	\end{eqnarray*}
	and the transitions not mentioned above can be completed easily, by
	extending each $ \mathsf{U}_{\sigma}$ to be unitary.

	On the left end-marker, $\mathcal{M}_{m,\epsilon}$ rejects with probability $ \epsilon^{2m+5} $, goes on to 
	scan the input string with amplitude $ \epsilon $, and restarts immediately with the remaining probability.
	States $q_{0}$  and $q_{1}$ implement the check for the regular
	expression $ \Sigma^{*}a $, but the machine restarts with probability $ 1 - \epsilon^{2} $
	on all input symbols during this check.

	If $ w=u \sigma^{'} $ for $ u \in \Sigma^{*} $, and $ \sigma^{'} \neq a $, the input is rejected with 
	probability $1$, since $ p_{acc}(\mathcal{M}_{m,\epsilon},w)=0 $.

	If $ w=ua $ for $ u \in \Sigma^{*} $,
	\[
		p_{acc}(\mathcal{M}_{m,\epsilon},w)= \epsilon^{2|w|+2}, ~~
		p_{rej}(\mathcal{M}_{m,\epsilon},w)=\epsilon^{2m+5}.
	\]
	Hence, if $ w \in A_{m} $,
	\[ p_{acc}(\mathcal{M}_{m,\epsilon},w) \ge \epsilon^{2m+4}, \]
	and if $ w \notin A_{m} $,
	\[ p_{acc}(\mathcal{M}_{m,\epsilon},w) \le \epsilon^{2m+6}. \]
	In both cases, the corresponding ratio 
	$ \frac{p_{rej}(\mathcal{M}_{m,\epsilon},w)}{p_{acc}(\mathcal{M}_{m,\epsilon},w)} $ or 
	$ \frac{p_{acc}(\mathcal{M}_{m,\epsilon},w)}{p_{rej}(\mathcal{M}_{m,\epsilon},w)}$ is not greater than
	$ \epsilon $. 
	Thus, by Lemma \ref{lemma:prej-over-pacc}, 
	we conclude that $ \mathcal{M}_{m,\epsilon} $ recognizes $A_{m} $ with error bounded by $ \epsilon $.
	Since $ p_{halt}(\mathcal{M}_{m,\epsilon},w) $ is always greater than $ \epsilon^{2m+5} $,
	the expected runtime of $ \mathcal{M}_{m,\epsilon} $ is $ O( \left( \frac{1}{\epsilon} \right)^{2m}|w|) $.
\end{proof}

By a theorem of Rabin \cite{Ra63}, for any fixed error bound, if a
language $L$ is recognized with bounded error by
a 1PFA with $n$ states, then there exists a deterministic finite
automaton (DFA) that recognizes $L$ with
$ 2^{O(n)} $ states.
Parallelly, Freivalds et al. \cite{FOM09} note that one-way quantum
finite automata with mixed states are
no more than superexponentially more concise than DFA's.
These facts can be used to conclude that a collection of 1PFA's (or
1QFA's) with a fixed common number of
states that recognize an infinite family of languages with a fixed common
error bound less than $ \frac{1}{2} $, \textit{à la} the two-way
quantum automata of Theorem
\ref{theorem:A_m}, cannot exist, since that would imply the existence
of a similar family of DFA's of fixed size.
By the same reasoning, the existence of such families of 2NFA's can
also be overruled.

The reader should note that there exists a bounded-error 1PFA$ ^{\circlearrowleft} $
(and therefore, a 2PFA\footnote{See Section \ref{section:1pfaR-vs-2pfa}
for an examination of the relationship between the
computational powers of the 1PFA$ ^{\circlearrowleft} $
and the 2PFA.},) for $ A_{m} $, which one can obtain simply by replacing
each transition amplitude of 1QFA$
^{\circlearrowleft} $ $ \mathcal{M}_{m,\epsilon} $
defined in Theorem \ref{theorem:A_m} by the square of its modulus.
This establishes the fact that 2PFA's also possess the succinctness
advantage discussed above over 1PFA's, 1QFA's and 2NFA's.

We proceed to present two more examples.
\begin{definition}
	\label{definition:B_m}
	For $ m \in \mathbb{Z}^{+} $, the language family $ B_{m} \subseteq \{a\}^{*} $ is defined as
	\[ B_{m}=\{a^{i} \mid i \mod(m) \equiv 0 \}. \]
\end{definition}
\begin{theorem}
	\label{theorem:B_m}
	For any error bound $ \epsilon > 0 $, there exists a
	7-state 1QFA$ ^{\circlearrowleft}$ $ \mathcal{M}_{m,\epsilon} $ which
	accepts any $ w \in B_{m} $ with certainty,
	and rejects any $ w \notin B_{m} $ with probability at least $ 1-\epsilon $.
	Moreover, the expected runtime of $ \mathcal{M}_{m,\epsilon} $ on $ w $ is
	$ O \left(\frac{1}{\epsilon}\sin^{-2}(\frac{\pi}{m})|w| \right) $.
\end{theorem}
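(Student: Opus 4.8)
The plan is to follow the template of Theorem~\ref{theorem:A_m} and of the $1$QFA$^{\circlearrowleft}$ constructions of Section~\ref{section:1qfa-restart}, but encoding the residue $i \bmod m$ into the \emph{angle} of a single qubit. I would take $\mathcal{M}_{m,\epsilon}$ with $Q_{non}=\{q_{0},q_{1}\}$, $Q_{acc}=\{A\}$, $Q_{rej}=\{R\}$, the initial state $q_{0}$, and a constant number of restart states (seven states in all suffice), and let each input letter $a$ act as the rotation by $\theta=\frac{\pi}{m}$ in the $(q_{0},q_{1})$-plane, i.e.\ $\mathsf{U}_{a}\ket{q_{0}}=\cos\theta\,\ket{q_{0}}+\sin\theta\,\ket{q_{1}}$ and $\mathsf{U}_{a}\ket{q_{1}}=-\sin\theta\,\ket{q_{0}}+\cos\theta\,\ket{q_{1}}$, while $\mathsf{U}_{\cent}$ merely moves right keeping $\ket{q_{0}}$. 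On the right end-marker I would crucially \emph{not} send $\ket{q_{0}}$ straight to acceptance; instead $\mathsf{U}_{\dollar}\ket{q_{0}}=\sqrt{\epsilon''}\,\ket{A}+\sqrt{1-\epsilon''}\,\ket{I_{1}}$, where $I_{1}$ is a restart state and $\epsilon''\in(0,1]$ is a parameter fixed below, and $\mathsf{U}_{\dollar}\ket{q_{1}}=\ket{R}$; the unmentioned entries of each $\mathsf{U}_{\sigma}$ are completed to unitary exactly as in the proof of Theorem~\ref{theorem:A_m}.

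The key observation is that throughout the scan the superposition stays inside $\mathrm{span}\{\ket{q_{0}},\ket{q_{1}}\}\subseteq E_{non}$, so every intermediate measurement returns ``continue'' with probability~$1$ and leaves the state untouched; hence after reading $\cent a^{i}$ the state is exactly $\cos(i\theta)\ket{q_{0}}+\sin(i\theta)\ket{q_{1}}$. Applying $\mathsf{U}_{\dollar}$ and measuring then gives, in the first round, $p_{acc}(\mathcal{M}_{m,\epsilon},a^{i})=\epsilon''\cos^{2}(i\pi/m)$ and $p_{rej}(\mathcal{M}_{m,\epsilon},a^{i})=\sin^{2}(i\pi/m)$. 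If $i\equiv 0\pmod m$ then $i\pi/m$ is an integer multiple of $\pi$, so $p_{acc}=\epsilon''>0$ and $p_{rej}=0$. If $i\equiv r\pmod m$ with $1\le r\le m-1$, then $\cos^{2}(i\pi/m)=\cos^{2}(r\pi/m)$ and $\sin^{2}(i\pi/m)=\sin^{2}(r\pi/m)$, and since $\sin$ restricted to $\{\,r\pi/m:1\le r\le m-1\,\}$ attains its minimum at the two endpoints, $p_{rej}\ge\sin^{2}(\pi/m)>0$ while $p_{acc}\le\epsilon''\cos^{2}(\pi/m)$, whence $\frac{p_{acc}}{p_{rej}}\le\epsilon''\cot^{2}(\pi/m)$.

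It then remains to invoke the basic facts of Section~\ref{subsection:basicfacts}. For $w\in B_{m}$ we have $p_{rej}(\mathcal{M}_{m,\epsilon},w)=0$ and $p_{acc}(\mathcal{M}_{m,\epsilon},w)>0$, so Lemma~\ref{lemma:acc-rej} gives $\mathsf{P_{acc}}(\mathcal{M}_{m,\epsilon},w)=1$: members are accepted with certainty. For $w\notin B_{m}$, choosing $\epsilon''=\min\{1,\,\epsilon\tan^{2}(\pi/m)\}$ (and $\epsilon''=1$ in the degenerate case $m=2$, where $\sin(\pi/m)=1$) makes $\frac{p_{acc}}{p_{rej}}\le\epsilon$, so Lemma~\ref{lemma:prej-over-pacc} yields $\mathsf{P_{rej}}(\mathcal{M}_{m,\epsilon},w)\ge 1-\epsilon$. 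Finally, a round lasts $|w|+2$ steps and, since $\epsilon''\ge\epsilon\sin^{2}(\pi/m)$ in all cases, $p_{halt}$ within a round is at least $\epsilon\sin^{2}(\pi/m)$ for every $w$; Lemma~\ref{lemma:expected-runtime} then gives the claimed expected runtime $O\!\left(\frac{1}{\epsilon}\sin^{-2}(\tfrac{\pi}{m})|w|\right)$.

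The only genuinely non-mechanical point is the choice of rotation angle: it must be $\pi/m$ rather than the more obvious $2\pi/m$, since with angle $2\pi/m$ and even $m$ the non-member $a^{m/2}$ would land at angle $\pi$ and be accepted with certainty. The second thing to get right --- and the reason this is not just an instance of Lemma~\ref{lemma:kwqfa-to-1qfarestart-constant}, whose generic amplifier would reject members with a small but nonzero probability --- is to keep $p_{rej}$ identically $0$ on members by spending the entire error budget on the shrunken acceptance amplitude $\sqrt{\epsilon''}$ read off at $\dollar$, so that on non-members the per-round acceptance-to-rejection ratio is squeezed below $\epsilon$ by the factor $\epsilon''$ while on members rejection never occurs at all. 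Everything else is routine bookkeeping.
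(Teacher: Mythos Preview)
Your proof is correct and rests on the same rotation-by-$\pi/m$ idea as the paper. The difference is only in how the amplification is carried out: the paper first builds the $4$-state 1KWQFA for $\overline{B_m}$ (your rotation machine, but with $\mathsf{U}_{\dollar}\ket{q_0}=\ket{R}$ and $\mathsf{U}_{\dollar}\ket{q_1}=\ket{A}$), then applies Lemma~\ref{lemma:kwqfa-to-1qfarestart-constant} verbatim to obtain a $7$-state 1QFA$^{\circlearrowleft}$ for $\overline{B_m}$, and finally swaps accepting and rejecting states. You instead construct the 1QFA$^{\circlearrowleft}$ in one shot, damping the acceptance amplitude by $\sqrt{\epsilon''}$ on $\dollar$ rather than injecting a fixed rejection probability via a separate path on $\cent$; this is tidier and in fact needs only five states. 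One correction, however: your parenthetical that Lemma~\ref{lemma:kwqfa-to-1qfarestart-constant} ``would reject members with a small but nonzero probability'' is mistaken. In that lemma's construction, $\mathsf{path}_1$ replaces every reject state of the underlying 1KWQFA by a restart state, so whenever the 1KWQFA has zero acceptance probability (as it does here on $w\in B_m$), the resulting 1QFA$^{\circlearrowleft}$ has $p_{acc}=0$ in every round and hence rejects with probability~$1$; after the accept/reject swap, members of $B_m$ are accepted with certainty, exactly as in your construction.
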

\begin{proof}
	We will construct a $ 4 $-state 1KWQFA recognizing $ \overline{B_{m}} $ with positive one-sided unbounded error,
	as described in \cite{AF98}.
	Let $ \mathcal{M}_{m} = (Q,\Sigma,\delta,q_{0},Q_{acc},Q_{rej}) $ be 1KWQFA with
	$ Q_{non}=\{q_{0},q_{1}\} $, $ Q_{acc}=\{A\} $, $ Q_{rej}=\{R\} $.
	$ \mathcal{M}_{m} $ contains the transitions
	\begin{eqnarray*}
		\mathsf{U}_{\cent} \ket{q_{0}} & = & \ket{q_{0}} \\
		\mathsf{U}_{a} \ket{q_{0}} & = & \cos(\frac{\pi}{m})\ket{q_{0}}+\sin(\frac{\pi}{m})\ket{q_{1}} \\
		\mathsf{U}_{a} \ket{q_{1}} & = & -\sin(\frac{\pi}{m})\ket{q_{0}}+\cos(\frac{\pi}{m})\ket{q_{1}} \\
		\mathsf{U}_{\dollar}\ket{q_{0}} & = & \ket{R} \\
		\mathsf{U}_{\dollar}\ket{q_{1}} & = & \ket{A},
	\end{eqnarray*}
	and the transition amplitudes not listed above are filled in to satisfy unitarity.
	$ \mathcal{M}_{m} $ begins computation at the
	$ \ket{q_{0}} $-axis, and performs a rotation by angle $ \frac{\pi}{m} $ 
	in the $ \ket{q_{0}} $-$ \ket{q_{1}} $ plane for each $ a $ it reads.
	Therefore, the value of the gap function, $ g_{\mathcal{M}_{m}} $, is not less than $ \sin^{2}(\frac{\pi}{m}) $
	for $ |w|>0 $.
	By Lemma \ref{lemma:kwqfa-to-1qfarestart-constant}, there exists a $ 7 $-state 
	1QFA$ ^{\circlearrowleft}$ $ \mathcal{M}_{m,\epsilon} $ recognizing $ \overline{B_{m}} $
	with positive one-sided bounded error and whose expected runtime is 
	$ O \left(\frac{1}{\epsilon}\sin^{-2}(\frac{\pi}{m})|w| \right) $.
	By swapping the accepting and rejecting states of $ \mathcal{M}_{m,\epsilon} $,
	we can get the desired machine.
\end{proof}

\begin{definition} 
	For an alphabet $ \Sigma $, and $ m \in \mathbb{Z}^{+} $, 
	the language family $ C_{m} $ is defined as
	 \[ C_{m}=\{ w \in \Sigma^{*} \mid |w| = m \}. \]
\end{definition}
\begin{theorem}
	\label{theorem:C_m}
	For any error bound $ \epsilon > 0 $, there exists a 7-state
	1QFA$ ^{\circlearrowleft} $ $ \mathcal{M}_{m,\epsilon} $ which accepts any $ w \in C_{m} $ with certainty, and
	rejects any $ w \notin C_{m} $ with probability at least $ 1-\epsilon $.
	Moreover, the expected runtime of $ \mathcal{M}_{m,\epsilon} $ on $ w $ is
	$ O(\frac{1}{\epsilon}2^{m}|w|) $.
\end{theorem}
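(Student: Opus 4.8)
The plan is to follow the template of Theorems~\ref{theorem:A_m} and~\ref{theorem:B_m}: first construct a constant--state 1KWQFA $\mathcal{M}_m$ recognizing $\overline{C_m}$ with positive one--sided unbounded error whose gap function satisfies $g_{\mathcal{M}_m}(n)=\Theta(2^{-m})$ for all $n$; then invoke Lemma~\ref{lemma:kwqfa-to-1qfarestart-constant} (with $c=\Theta(2^m)$) to turn $\mathcal{M}_m$ into a 1QFA$^{\circlearrowleft}$ with three more states that recognizes $\overline{C_m}$ with positive one--sided bounded error $\epsilon$ in expected time $O(\frac{1}{\epsilon}2^m|w|)$; and finally swap the accepting and rejecting states of that machine to obtain the desired $7$--state $\mathcal{M}_{m,\epsilon}$, which then accepts every $w\in C_m$ with certainty and rejects every $w\notin C_m$ with probability at least $1-\epsilon$. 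All of this outer scaffolding is identical to the earlier proofs, so the real content is the construction and analysis of $\mathcal{M}_m$.

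For $\mathcal{M}_m$ I would use $Q_{non}=\{q_0,q_1\}$, $Q_{acc}=\{A\}$, $Q_{rej}=\{R\}$, and, since membership in $C_m$ depends only on $|w|$, the same operator $\mathsf{U}$ for every $\sigma\in\Sigma$, with $\mathsf{U}_{\cent}\ket{q_0}=\ket{q_0}$. The operator $\mathsf{U}$ is chosen so that its restriction to $\mathrm{span}\{q_0,q_1\}$ is a strict contraction that bleeds amplitude into $R$ on every symbol: concretely, pick orthonormal $v_1,v_2$ at a fixed ($m$--independent) angle to the $\{q_0,q_1\}$ axes with $\mathsf{U}v_1=v_1$ and $\mathsf{U}v_2=\tfrac{1}{\sqrt2}v_2+\tfrac{1}{\sqrt2}\ket{R}$, and complete each $\mathsf{U}_\sigma$ to a unitary arbitrarily. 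Writing $\ket{q_0}=a v_1+b v_2$ with $a,b\neq 0$ fixed, a telescoping computation shows that the probability of surviving $n$ input symbols is $a^2+b^2 2^{-n}\ge a^2$ and that the surviving renormalized state is $\psi_n\propto a v_1+b\,2^{-n/2}v_2$. On $\dollar$, $\mathsf{U}_{\dollar}$ is taken to send the direction of $\psi_m$ into $R$ and the orthogonal direction $u$ into $A$; then $p_{acc}(\mathcal{M}_m,w)=(a^2+b^2 2^{-|w|})\,|\langle u\mid\psi_{|w|}\rangle|^2$. Since $\psi_n$ is parallel to $\psi_m$ exactly when $2^{-n/2}=2^{-m/2}$, i.e.\ $n=m$, every word of length $\neq m$ is accepted with positive probability and every word of length $m$ is rejected with probability $1$, as required for positive one--sided unbounded error.

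For the gap, the survival factor is always at least $a^2$, so $g_{\mathcal{M}_m}(n)$ is governed by $\min_{k\le n,\ k\neq m}|\langle u\mid\psi_k\rangle|^2$. A direct calculation gives $|\langle u\mid\psi_k\rangle|=\dfrac{|ab|\,|2^{-k/2}-2^{-m/2}|}{\sqrt{(a^2+b^2 2^{-m})(a^2+b^2 2^{-k})}}$, whose smallest value over integers $k\neq m$ is $\Theta(2^{-m/2})$, attained both at $k=m\pm 1$ and in the limit $k\to\infty$ (the latter because $\psi_k\to v_1$ at rate $2^{-1/2}$). Hence $|\langle u\mid\psi_k\rangle|^2=\Theta(2^{-m})$ for all $k\neq m$, so the (non--increasing) gap function has limit $\Theta(2^{-m})$, and Lemma~\ref{lemma:kwqfa-to-1qfarestart-constant} applies with $c=\Theta(2^m)$.

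The step I expect to be the main obstacle is the last part of the gap analysis: one must verify that when $|w|$ is very large and $\psi_{|w|}$ has essentially converged to $v_1$, the overlap $|\langle u\mid\psi_{|w|}\rangle|^2$ with the accepting direction is still $\Theta(2^{-m})$ rather than $0$ (which it would be if $u\perp v_1$) or exponentially smaller in $|w|$ (which it would be if the $v_2$--component decayed faster than $2^{-n/2}$). This is precisely what pins down the eigenvalue ratio, equivalently the per--symbol leak amplitude into $R$, to $\tfrac{1}{\sqrt2}$, and it is the reason the runtime carries a factor $2^m$ rather than $\mathrm{poly}(m)$ as in Theorem~\ref{theorem:B_m}. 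The unitary completion of the $\mathsf{U}_\sigma$, the telescoping of the survival probabilities, and the bookkeeping around Lemma~\ref{lemma:kwqfa-to-1qfarestart-constant} and the accept/reject swap are all routine.
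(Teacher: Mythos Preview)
Your proposal is correct and follows essentially the same approach as the paper: build a 4-state 1KWQFA for $\overline{C_m}$ in which the two non-halting directions have per-symbol eigenvalues $1$ and $\tfrac{1}{\sqrt{2}}$, show the resulting gap is $\Theta(2^{-m})$ (minimized at $|w|=m\pm 1$), apply Lemma~\ref{lemma:kwqfa-to-1qfarestart-constant}, and swap accept/reject states. The only cosmetic difference is where the $m$-dependence is placed: the paper takes $v_1=\ket{q_1}$, $v_2=\ket{q_0}$ and hardwires $m$ into $\mathsf{U}_\cent$ by giving $\ket{q_1}$ initial amplitude $(\tfrac{1}{\sqrt{2}})^{m+1}$, then uses an $m$-independent QFT on $\dollar$, whereas you keep $\mathsf{U}_\cent$ trivial and instead hardwire $m$ into the accepting direction of $\mathsf{U}_\dollar$.
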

\begin{proof}
	We will contruct a $ 4 $-state 1KWQFA recognizing $ \overline{C_{m}} $ with positive one-sided unbounded error.
	Let $ \mathcal{M}_{m} = (Q,\Sigma,\delta,q_{0},Q_{acc},Q_{rej}) $ be 1KWQFA with
	$ Q_{non}=\{q_{0},q_{1}\} $, $ Q_{acc}=\{A\} $, $ Q_{rej}=\{R\} $.
	$ \mathcal{M}_{m} $ contains the transitions
	\begin{eqnarray*}
		\mathsf{U}_{\cent}\ket{q_{0}} & = &  \frac{1}{\sqrt{2}} \ket{q_{0}} + 
			\left( \frac{1}{\sqrt{2}} \right)^{m+1} \ket{q_{1}} + 
			\sqrt{\frac{1}{2}-\left( \frac{1}{2} \right)^{m+1} } \ket{R} \\
		\mathsf{U}_{\sigma \in \Sigma} \ket{q_{0}} & = & \frac{1}{\sqrt{2}} \ket{q_{0}}
			+ \frac{1}{\sqrt{2}} \ket{R} \\
		\mathsf{U}_{\sigma \in \Sigma} \ket{q_{1}} & = & \ket{q_{1}} \\
		\mathsf{U}_{\dollar}\ket{q_{0}} & = & \frac{1}{\sqrt{2}} \ket{A} + \frac{1}{\sqrt{2}} \ket{R} \\
		\mathsf{U}_{\dollar}\ket{q_{1}} & = & -\frac{1}{\sqrt{2}} \ket{A} + \frac{1}{\sqrt{2}} \ket{R} \\
	\end{eqnarray*}
	with the amplitudes of the transitions not mentioned above filled in to ensure unitarity. 

	$ \mathcal{M}_{m}$ encodes the length of the input string in the amplitude of state $ q_{0} $, 
	which equals $ \left( \frac{1}{\sqrt{2}} \right)^{|w|+1} $ just before the processing of the right end-marker.
	The desired length $m$ is ``hardwired" into the amplitudes of $ q_{1} $.
	For a given input string $ w \in \Sigma^{*} $, if $ w \in C_{m} $, 
	then the amplitudes of states $ q_{0} $ and $ q_{1} $ are equal, and
	the quantum Fourier transform (QFT) \cite{KW97} performed on 
	the right end-marker sets the amplitude of $ A $ to 0. 
	Therefore, $ w $ is rejected with certainty.
	If $ w \in \overline{C_{m}} $, then the accepting probability is equal to
	\[
		\left( \left( \frac{1}{\sqrt{2}} \right)^{|w|+2} -
		 \left( \frac{1}{\sqrt{2}} \right)^{m+2} \right)^{2},
	\]
	and it is minimized when $ |w|=m+1 $, which gives us the inequality
	\[  g_{\mathcal{M}_{m}}(w) > \left( \frac{1}{2} \right)^{m+6}. \]
	By Lemma \ref{lemma:kwqfa-to-1qfarestart-constant}, there exists a $ 7 $-state 
	1QFA$ ^{\circlearrowleft}$ $ \mathcal{M}_{m,\epsilon} $ recognizing $ \overline{C_{m}} $
	with positive one-sided bounded error and whose expected runtime is 
	$ O \left(\frac{1}{\epsilon}2^{m}|w| \right) $.
	By swapping the accepting and rejecting states of $ \mathcal{M}_{m,\epsilon} $,
	we can get the desired machine.
\end{proof}

Note that, unlike what we had with Theorem \ref{theorem:A_m}, the QFA's of Theorems \ref{theorem:B_m}
and \ref{theorem:C_m} cannot be converted so easily to 2PFA's.
In fact, we can prove that there exist no 2PFA families of fixed size which recognize $ B_m $
and $ C_m $ with fixed one-sided error less than $ \frac{1}{2} $, like those QFA's:
Assume that such a 2PFA family exists. Switch the accept and reject states to obtain a family for the complements
of the languages. The 2PFA's thus obtained operate with cutpoint 0.
Obtain an equivalent 2NFA with the same number of states by converting all 
transitions with nonzero weight to nondeterministic transitions.
But there are only finitely many 2NFA's of this size, meaning that 
they cannot recognize our infinite family of languages.

\section{Efficient Probability Amplification} \label{section:efficient-probability-amplification}
Many automaton descriptions in this paper, and elsewhere in the theory
of probabilistic and quantum automata, describe not a single
algorithm, but a general template which one can use for building a
machine $ M_{\epsilon} $ that operates with a desired error bound $ \epsilon $.
The dependences of the runtime and number of states of $ M_{\epsilon}
$ on $ \frac{1}{\epsilon} $
are measures of the complexity of the probability
amplification process involved in the construction method used.
Viewed as such, the constructions described in the theorems in
Section \ref{section:Conciseness} are maximally efficient in terms of the state cost,
with no dependence on the error bound. In this section, we present improvements over 
previous results about the efficiency of probability amplification in 2QFA's.

\subsection{Improved algorithms for $ L_{eq} $} \label{subsection:2QFA-L_eq}

In classical computation, one only needs to sequence $ O(\log(\frac{1}{\epsilon})) $
identical copies of a given probabilistic automaton with one sided
error $ p < 1 $ to run on the same
input in order to obtain a machine with error bound $ \epsilon $.
Yakary{\i}lmaz and Say \cite{YS09B} noted that this method of
probability amplification
does not yield efficient results for 2KWQFA's; the number of machine
copies required to reduce the error to $\epsilon$
can be as high as $ ( \frac{1}{\epsilon} )^2 $.
The most succinct 2KWQFA's for $ L_{eq} $
produced by alternative methods developed in \cite{YS09B}
have $ O(\log^{2}(\frac{1}{\epsilon})\log\log(\frac{1}{\epsilon})) $ states,
and runtime linear in the size of the input $ w $.
In Appendix \ref{appendix:L_eq}, we present a construction  which yields (exponential
time) 1QFA$ ^{\circlearrowleft} $'s that recognize $ L_{eq} $ within any desired error bound
$ \epsilon $, with no dependence of the state set size on $ \epsilon $.
Ambainis and Watrous \cite{AW02} present a method which can be used to build
2QCFA's that recognize $ L_{eq} $ also with constant state set size,
where the ``tuning" of the automaton for a particular
error bound is achieved by setting some transition amplitudes
appropriately, and the expected runtime of those machines is $ O(|w|^4) $.
We now show that the 2QFA$ ^{\circlearrowleft} $ formalism allows more
efficient probability amplification.
\begin{theorem}
	\label{theorem:2qfarestart_anbn}
	There exists a constant $n$, such that, for any $ \epsilon>0 $, an $n$-state 2QFA$ ^{\circlearrowleft} $
	which recognizes $ L_{eq} $ with one-sided error bound $ \epsilon $ 
	within $ O(\frac{1}{\epsilon}|w|) $ expected runtime can be constructed, where $ w $ is the input string.
\end{theorem}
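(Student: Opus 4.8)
The plan is to build a 2QFA$^{\circlearrowleft}$ that combines the Kondacs--Watrous zero-error idea for $L_{eq}$ with a restart mechanism tuned so that the ratio $p_{rej}/p_{acc}$ (resp.\ $p_{acc}/p_{rej}$) is bounded by $\epsilon$ in one round, and then invoke Lemma~\ref{lemma:prej-over-pacc}. The starting point is the classical KW algorithm: on the left end-marker, split into a constant number of equiprobable branches; in one branch run the head left-to-right comparing $a$'s and $b$'s (checking the input has the form $a^*b^*$ and that a certain linear relation among the block lengths holds by having two subpaths move at different ``speeds'' and meet), and use a QFT-style interference at the right end-marker so that for $w\in L_{eq}$ the computation is driven toward an accepting outcome with nonzero amplitude, while for $w\notin L_{eq}$ of the wrong form it is rejected outright. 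The key modification for the restart model: instead of ``reject'' verdicts that would kill the zero-error structure, replace the non-accepting continuations by \emph{restart} moves, so that each round either accepts (only possible when $w\in L_{eq}$), rejects with a small controlled probability, or restarts.

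Next I would install a dedicated ``rejection-calibration'' branch, exactly in the style of $\mathsf{path_2}$ in the proofs of Lemmas~\ref{lemma:kwqfa-to-1qfarestart-exponential} and \ref{lemma:kwqfa-to-1qfarestart-constant}: a branch that, independently of the input, walks to the right end-marker and there rejects with amplitude $\sqrt{\epsilon}$ and restarts with amplitude $\sqrt{1-\epsilon}$. The point of using the 2QFA$^{\circlearrowleft}$ (two-way) rather than the one-way 1QFA$^{\circlearrowleft}$ is that here $p_{acc}$ for $w\in L_{eq}$ does \emph{not} decay exponentially in $|w|$ — the two-way head motion lets the ``accepting'' branch contribute an amplitude bounded below by a constant (the KW construction recognizes $L_{eq}$ with a fixed positive gap, not a $c^{-|w|}$ gap), so the calibrated rejection amplitude $\sqrt{\epsilon}$ only needs to be a constant, not $\epsilon/2c^{|w|}$. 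Consequently $p_{halt}$ in each round is $\Omega(\epsilon)$ uniformly in $|w|$, and by Lemma~\ref{lemma:expected-runtime} the expected runtime is $O(\tfrac{1}{\epsilon}\,s(|w|))$ with $s(|w|)=O(|w|)$ the length of a single two-way pass, giving the claimed $O(\tfrac{1}{\epsilon}|w|)$. The state count is the (constant) number of states of the KW-style subroutine plus the three extra states for the calibration branch, hence an absolute constant $n$ independent of $\epsilon$, with the $\epsilon$-dependence pushed entirely into the transition amplitudes $\sqrt{\epsilon}$, $\sqrt{1-\epsilon}$.

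Then I would verify the error bound via Lemma~\ref{lemma:prej-over-pacc}: for $w\notin L_{eq}$ we have $p_{acc}=0$ (no branch ever reaches an accept state), so the input is rejected with probability $1$; for $w\in L_{eq}$ we get $p_{acc}\ge \kappa$ for a fixed constant $\kappa>0$ coming from the KW interference and the branching weights, while $p_{rej}=\Theta(\epsilon)$ comes solely from the calibration branch, so $p_{rej}/p_{acc}=O(\epsilon)$, and after rescaling the amplitude $\sqrt{\epsilon}$ by an appropriate constant factor this ratio is $\le\epsilon$, yielding $\mathsf{P_{acc}}(w)\ge 1-\epsilon$. This also certifies one-sidedness, since rejection of a member never occurs with certainty being overturned — members are never accepted erroneously only trivially, and non-members are rejected with probability exactly $1$.

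The main obstacle I anticipate is the careful design of the two-way ``accepting'' subroutine so that it (a) uses only a \emph{constant} number of states regardless of $\epsilon$, (b) produces a \emph{fixed} positive acceptance amplitude for \emph{every} $w\in L_{eq}$ (uniform in $n$, where $w=a^nb^n$), and (c) never produces a spurious accept on any $w\notin L_{eq}$, including strings of the right form $a^ib^j$ with $i\ne j$ — this last point is where the KW comparison-by-interference must be reproduced faithfully, ensuring destructive interference at the accept state whenever $i\ne j$ and constructive interference whenever $i=j$. Keeping the reset/restart bookkeeping unitary (extending each $\mathsf{U}_\sigma$ to a unitary operator on $\ell_2(Q)$, as permitted by the simplification in Section~\ref{subsection:definitions}) while threading the calibration branch through the same end-marker transitions is a routine but delicate amplitude-accounting task; once the subroutine's constant gap is in hand, the restart analysis and runtime bound follow immediately from Lemmas~\ref{lemma:acc-rej}, \ref{lemma:prej-over-pacc}, and \ref{lemma:expected-runtime}.
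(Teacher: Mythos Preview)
Your plan has the one-sidedness of the Kondacs--Watrous construction backwards, and this breaks the argument. The KW 2KWQFA $M_N$ accepts every $w\in L_{eq}$ with probability $1$ and rejects non-members of the form $a^ib^j$ with probability at least $\tfrac{1}{N}$; it does \emph{not} have the property that non-members are never accepted. Concretely, when $i\ne j$ the $N$ computational paths reach the right end-marker at \emph{different} times, so there is no interference between them at all---each path performs its own QFT and spreads uniformly over the $N$ target states. Thus your claimed ``destructive interference at the accept state whenever $i\ne j$'' does not occur; non-members $a^ib^j$ are accepted by $M_N$ with probability as large as $1-\tfrac{1}{N}$. If you replace the reject states of $M_N$ by restart states and add only a rejection-calibration branch, then for non-members you still have $p_{acc}>0$ coming from the KW subroutine, and the machine will in fact \emph{accept} every $a^ib^j$ with overall probability $1$.

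The paper's construction is the dual of what you propose: keep the rejection component of the KW QFT (which is nonzero exactly for non-members) and calibrate \emph{acceptance}, not rejection. Starting from $M_2$, the paper redirects the QFT targets so that each of the two paths sends amplitude $\pm\tfrac{1}{\sqrt2}$ to Reject, $\sqrt{\epsilon/2}$ to Accept, and $\sqrt{(1-\epsilon)/2}$ to Restart. Then members (paths aligned, Reject amplitudes cancel) have $p_{rej}=0$, $p_{acc}=\epsilon$; non-members of the form $a^ib^j$ (paths misaligned, no cancellation) have $p_{rej}=\tfrac12$, $p_{acc}=\tfrac{\epsilon}{2}$; and Lemma~\ref{lemma:prej-over-pacc} plus Lemma~\ref{lemma:expected-runtime} give the error bound and the $O(\tfrac{1}{\epsilon}|w|)$ runtime. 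Your separate-branch idea would work too, but only if you instead convert the KW \emph{accept} states to restart states and add an \emph{acceptance}-calibration branch that accepts with probability $\Theta(\epsilon)$.
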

\begin{proof}
	We start with Kondacs and Watrous' original 2KWQFA \cite{KW97} $ M_N $, 
	which recognizes $ L_{eq} $ with one-sided error $ \frac{1}{N} $, for any integer $ N>1 $. 
	After a deterministic test for membership of $ a^{*}b^{*} $, 
	$ M_N $ branches to $ N $ computational paths, each of which perform a QFT at the end of the computation. 
	Set $ N=2 $. $ M_2 $ accepts all members of $ L_{eq} $ with probability 1.
	Non-members of $ L_{eq} $ are rejected with probability at least $ \frac{1}{2} $.
	We convert $ M_2 $ to a 2QFA$ ^{\circlearrowleft} $ $ \mathcal{M}_{\epsilon}^{'} $ 
	by changing the target states of the QFT as follows:
	\[
		\mathsf{path_{1}} \rightarrow \frac{1}{\sqrt{2}}\ket{\mbox{Reject}}
		+ \sqrt{\frac{\epsilon}{2}} \ket{\mbox{Accept}}+ \sqrt{\frac{1-\epsilon}{2}}\ket{\mbox{Restart}}
	\]
	\[
		\mathsf{path_{2}} \rightarrow -\frac{1}{\sqrt{2}}\ket{\mbox{Reject}}
		+ \sqrt{\frac{\epsilon}{2}} \ket{\mbox{Accept}}+ \sqrt{\frac{1-\epsilon}{2}}\ket{\mbox{Restart}}
	\]
	where the amplitude of each path is $ \frac{1}{\sqrt{2}} $.
	For a given input $ w \in \Sigma^{*} $, 
	\begin{enumerate}
		\item if $ w $ is not of the form $ a^{*}b^{*} $, then $ p_{rej}(\mathcal{M}_{\epsilon}^{'},w)=1 $;
		\item if $ w $ is of the form $ a^{*}b^{*} $ and $ w \notin L $, 
			then $ p_{rej}(\mathcal{M}_{\epsilon}^{'},w)=\frac{1}{2} $, and 
			$ p_{acc}(\mathcal{M}_{\epsilon}^{'},w)=\frac{\epsilon}{2} $;
		\item if $ w \in L $, then $ p_{rej}(\mathcal{M}_{\epsilon}^{'},w)=0 $ and 
			$ p_{acc}(\mathcal{M}_{\epsilon}^{'},w)=\epsilon $.
	\end{enumerate}
	It is easily seen that the error is one-sided. 
	Since $ \frac{p_{acc}(\mathcal{M}_{\epsilon}^{'},w)}{p_{rej}(\mathcal{M}_{\epsilon}^{'},w)} = \epsilon $,
	we can conclude with Lemma \ref{lemma:prej-over-pacc}.
	Moreover, the minimum halting probability occurs in the third case above, and so
	the expected runtime of $ \mathcal{M}_{\epsilon}^{'} $ is $ O(\frac{1}{\epsilon}|w|) $.
\end{proof}
\begin{theorem}
	\label{theorem:2qfareset_anbn}
	For any $ \epsilon \in (0,\frac{1}{2}) $, there exists a 2QFA$ ^{\curvearrowleft} $ with
	$ O(\log(\frac{1}{\epsilon})) $ states that recognizes $ L_{eq} $ with 
	one-sided error bound $ \epsilon $ in $ O(\log(\frac{1}{\epsilon})|w|) $ steps, 
	where $ w $ is the input string.
\end{theorem}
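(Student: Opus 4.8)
The plan is to use Theorem \ref{theorem:2qfarestart_anbn} as a black box together with the standard majority-vote probability amplification machinery described earlier in Section \ref{subsection:basicfacts}, but carried out inside a single two-way automaton with \emph{reset} (not restart), so that the runtime does not blow up. First I would take the $n$-state 2QFA$^{\circlearrowleft}$ $\mathcal{M}_{1/3}$ of Theorem \ref{theorem:2qfarestart_anbn} tuned to a fixed error bound, say $\epsilon_0=1/3$; its expected runtime is $O(|w|)$ and it has one-sided error. The idea is then to build a 2QFA$^{\curvearrowleft}$ that runs $O(\log(\frac{1}{\epsilon}))$ independent copies of this machine sequentially, tallying accept/reject verdicts, and outputting the majority. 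Crucially, instead of restarting to the fixed initial state after each sub-run, the reset capability lets the machine jump to a fresh ``start state'' of the next copy while counting how many acceptances and rejections have occurred so far — exactly the construction of the $\mathcal{A}_{i,j}$-network sketched before Lemma \ref{lemma:restart-to-reset}.

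The key steps, in order, are: (i) fix the base machine $\mathcal{M}_{1/3}$ and recall it has one-sided error, so in fact only the count of one type of verdict matters and a simpler counter suffices — a single counter modulo the needed threshold $k=\Theta(\log\frac{1}{\epsilon})$ rather than a two-dimensional one; (ii) take $k$ copies $\mathcal{A}_0,\dots,\mathcal{A}_{k-1}$ of $\mathcal{M}_{1/3}$, where entering a halting state of $\mathcal{A}_i$ triggers a reset move that returns the head to $\cent$ and switches to the start state of $\mathcal{A}_{i+1}$, carrying along in the state index the number of ``wrong-looking'' verdicts seen so far; (iii) declare the global machine to halt as soon as the running tally of the majority-relevant outcome crosses $k/2$, outputting the corresponding answer. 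Since each copy's run is internally loop-free (each round of $\mathcal{M}_{1/3}$ halts with probability~1), the composite is a legal 2QFA$^{\curvearrowleft}$ under the paper's standing assumption. The total state count is $O(k\cdot n)=O(\log(\frac{1}{\epsilon}))$ since $n$ is constant, and since each of the $O(\log\frac{1}{\epsilon})$ phases has expected length $O(|w|)$, the expected runtime is $O(\log(\frac{1}{\epsilon})|w|)$ by linearity of expectation. A Chernoff/Hoeffding bound on $k=\Theta(\log\frac{1}{\epsilon})$ independent Bernoulli trials with bias bounded away from $1/2$ gives overall error at most $\epsilon$, and one checks one-sidedness survives because the base machine never errs on members of $L_{eq}$.

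The main obstacle I anticipate is making precise the claim that the $O(\log\frac{1}{\epsilon})$ phases behave as \emph{independent} trials despite the reset move occurring as part of a measurement that, as the paper notes, induces decoherence. One must argue that after the measurement outcome ``accept'' or ``reject'' of copy $\mathcal{A}_i$ is recorded and the head is reset to $\cent$ in state $q_0^{(i+1)}$, the subsequent computation of $\mathcal{A}_{i+1}$ starts from a pure state independent of the history — which is exactly what the reset semantics in Equation \ref{equation:Epartition} and the surrounding text guarantee, since ``reset with state $q$'' places the machine in $\ket{q,0}$ regardless of the pre-measurement superposition. So the randomness across phases is genuinely fresh, and the Chernoff bound applies verbatim. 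A secondary (routine) point is bookkeeping the exact relationship between $\epsilon_0=1/3$, the threshold $k$, and the target $\epsilon$ via Lemma \ref{lemma:prej-over-pacc} and Lemma \ref{lemma:acc-rej}, which is the same calculation as in Lemma \ref{lemma:restart-to-reset} and needs no new ideas.
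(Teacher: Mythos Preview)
Your approach is correct in outline and lands on essentially the same chaining idea, but the paper takes a shorter and slightly stronger route. The paper does \emph{not} use the 2QFA$^{\circlearrowleft}$ $\mathcal{M}_{1/3}$ of Theorem~\ref{theorem:2qfarestart_anbn} as the base machine; it goes one level deeper and uses the raw Kondacs--Watrous 2KWQFA $M_2$ (the $N=2$ instance, with one-sided error $\tfrac{1}{2}$ and worst-case linear running time, no restarts at all). Since $M_2$ already has one-sided error, the paper simply chains $O(\log\tfrac{1}{\epsilon})$ copies via the reset move and accepts iff \emph{every} copy accepts, rejecting as soon as any copy rejects: no majority vote, no Chernoff bound, no counters. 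You effectively reach the same ``AND of copies'' construction once you note that one-sidedness collapses the two-dimensional $\mathcal{A}_{i,j}$ grid to a single chain, so the detour through Lemma~\ref{lemma:restart-to-reset} and a tail-bound argument is unnecessary.

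The one substantive cost of your choice of base machine is the runtime guarantee. The 2KWQFA $M_2$ halts in \emph{worst-case} $O(|w|)$ steps, so the paper's chained 2QFA$^{\curvearrowleft}$ halts in worst-case $O(\log\tfrac{1}{\epsilon}\,|w|)$ steps, matching the theorem's unqualified ``steps'' (note the deliberate contrast with ``expected runtime'' in Theorem~\ref{theorem:2qfarestart_anbn}). Your base machine $\mathcal{M}_{1/3}$ carries internal restarts and has only \emph{expected} linear runtime, so each phase of your composite is unbounded in the worst case; you obtain the stated bound only in expectation. Everything else in your write-up (independence of phases via the reset semantics, legality under the paper's loop-freeness assumption) is fine.
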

\begin{proof}
	Let $ M_{2} $ be the 2KWQFA recognizing $ L_{eq} $ with one-sided error bound $ \frac{1}{2} $ 
	mentioned in the proof of Theorem \ref{theorem:2qfarestart_anbn}.
	Then, a 2QFA$ ^{\curvearrowleft} $ that is constructed by sequentially connecting
	$ O(\log(\frac{1}{\epsilon})) $ copies of $ M_{2} $, so that the input is accepted only if it is accepted by
	all the copies, and rejected otherwise, can recognize $ L_{eq} $ with one-sided error bound $ \epsilon $.
\end{proof}

\subsection{An improved algorithm for $ L_{pal} $} \label{subsection:L_pal}

Ambainis and Watrous \cite{AW02} present a 2QCFA construction which
decides $ L_{pal} $ in expected time $ O( \left( \frac{1}{\epsilon} \right)^{|w|}|w|) $
with error bounded by $ \epsilon > 0 $, where $ w $ is the input string.
(Watrous \cite{Wa98} describes a 2KWQFA which accepts all members of the
complement of $ L_{pal} $ with probability 1, and fails to halt for all palindromes;
it is not known if 2KWQFA's can recognize this language by halting for all inputs.)
We will now present a 1QFA$ ^{\circlearrowleft} $ construction, which,
by Lemma \ref{lemma:1qfareset-simulatedby-2qcfa},
can be adapted to yield 2QCFA's with the same complexity, which reduces
the dependence of the Ambainis-Watrous method on the desired error
bound considerably.

\begin{theorem}
	\label{theorem:1qfa-restart-palindrome}
	For any $ \epsilon>0 $, there exists a 15-state 1QFA$ ^{\circlearrowleft} $ $ \mathcal{M}_{\epsilon} $
	which accepts any $ w \in L_{pal} $ with certainty, and rejects any $ w \notin L_{pal} $
	with probability at least $ 1-\epsilon $.
	Moreover, the expected runtime of $ \mathcal{M}_{\epsilon} $ on $ w $ is $ O(\frac{1}{\epsilon}3^{|w|}|w|) $.
\end{theorem}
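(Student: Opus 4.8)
The plan is to mimic the structure of the earlier theorems in this section: build a 1KWQFA $\mathcal{M}_{pal}$ that recognizes $\overline{L_{pal}}$ with positive one-sided unbounded error and a gap function bounded below by an inverse exponential, and then invoke Lemma~\ref{lemma:kwqfa-to-1qfarestart-exponential} (with $c$ chosen so that $3^{-|w|}$ dominates the gap) to obtain a 1QFA$^{\circlearrowleft}$ for $\overline{L_{pal}}$ with bounded one-sided error $\epsilon$; swapping accept and reject states then yields the machine accepting all palindromes with certainty. The state count bookkeeping must land on $15$: the underlying 1KWQFA will need a constant number of states, Lemma~\ref{lemma:kwqfa-to-1qfarestart-exponential} adds three, and the final construction should be arranged so the total is exactly $15$.

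The core difficulty is designing the 1KWQFA itself, since $L_{pal}$ is nonregular and a one-way device cannot simply compare $w$ against $w^R$ by head movement. The standard trick (as in Watrous~\cite{Wa98} and in the arithmetization behind S$^=_{rat}$) is to read $w$ over $\{a,b\}$ and treat it as encoding two integers: interpret the prefix as a base-$k$ number read most-significant-digit-first and simultaneously accumulate the reverse reading, so that $w=w^R$ iff two amplitude-encoded quantities coincide. Concretely, I would have one branch amplitude evolve by a unitary that, on reading symbol $\sigma$, multiplies a ``value register'' encoded in the amplitude by a fixed constant and adds an offset depending on $\sigma$, while a second branch does the mirror-image accumulation; a QFT-style interference step on the right end-marker cancels the accepting amplitude exactly when the two encoded values are equal (i.e.\ the string is a palindrome). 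Because the encoded numbers are bounded by $k^{|w|}$, any discrepancy between a non-palindrome and its reverse is at least $1$ in these integer units, so after normalization the surviving accepting amplitude is at least something like $k^{-|w|}$, giving $g_{\mathcal{M}_{pal}}(n)\ge c^{-n}$ for a suitable constant $c$; tuning $k$ forces $c<3$ (or one simply absorbs the slack into the constant $3$ of the statement, noting $O(\frac{1}{\epsilon}c^{|w|}|w|)\subseteq O(\frac{1}{\epsilon}3^{|w|}|w|)$ once $c\le 3$).

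The main obstacle I anticipate is getting the amplitude arithmetic to be genuinely unitary with only a constant number of states: encoding an unbounded integer in a single amplitude while keeping every $\mathsf{U}_\sigma$ unitary requires care, typically by pairing each ``value'' amplitude with a complementary ``garbage'' amplitude (as in the stochastic-to-unitary template used in Theorem~\ref{theorem:pfa-restart-simulated-by-qfa-restart}) and routing the complementary mass into reject/restart branches. I would handle this by dedicating a small fixed block of states to each of the two mirror accumulators plus a couple of states for the final interference and the accept/reject/restart outcomes, verify unitarity column by column exactly as the earlier proofs do (``filling in the remaining amplitudes to ensure unitarity''), and then do the standard gap estimate: for $w\in L_{pal}$ the interference is exact so $p_{acc}=0$ and the input is rejected with certainty after the accept/reject swap, while for $w\notin L_{pal}$ the accepting amplitude is bounded below by an inverse exponential in $|w|$. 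Once the gap bound $g_{\mathcal{M}_{pal}}(n)\ge c^{-n}$ with $c\le 3$ is in hand, Lemma~\ref{lemma:kwqfa-to-1qfarestart-exponential} delivers the $15$-state 1QFA$^{\circlearrowleft}$ with one-sided error $\epsilon$ and expected runtime $O(\frac{1}{\epsilon}3^{|w|}|w|)$, and the accept/reject swap converts ``rejects non-members of $\overline{L_{pal}}$ with certainty'' into ``accepts palindromes with certainty,'' completing the proof.
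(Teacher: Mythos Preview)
Your proposal is correct and follows essentially the same route as the paper: construct a 1KWQFA for $\overline{L_{pal}}$ by amplitude-encoding $w$ and $w^R$ in two parallel branches (the paper credits L\={a}ce et al.\ and uses the Rabin/Paz binary-fraction encoding embedded unitarily via the $\sqrt{2/3}$ normalization), interfere them on $\dollar$ so that the accepting amplitude vanishes exactly on palindromes, bound the gap below by $\frac{1}{8}(1/3)^{|w|}$, apply Lemma~\ref{lemma:kwqfa-to-1qfarestart-exponential} with $c=3$, and swap accept/reject. The only point you leave somewhat loose is the exact state count: the paper's base 1KWQFA uses precisely $12$ states (six nonhalting, one accepting, five rejecting), and the constant $3$ in the runtime is not a free parameter you ``tune'' via the encoding base but is forced by the product of the $(2/3)^{|w|}$ unitarization loss and the $2^{-|w|}$ worst-case binary discrepancy.
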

\begin{proof}
	We will first construct a modified version of the 1KWQFA algorithm of L\={a}ce et al. \cite{LSF09} 
	for recognizing the nonpalindrome language.
	The idea behind the construction is that we encode both the input 
	string and its reverse into the amplitudes of two of the states of the machine, 
	and then perform a substraction between these amplitudes using the QFT \cite{LSF09}.
	If the input is not a palindrome, the two amplitudes do not cancel each other completely, 
	and the nonzero difference is transferred to an accept state. 
	Otherwise, the accepting probability will be zero.

	Let $ \mathcal{M} = (Q,\Sigma,\delta,q_{0},Q_{acc},Q_{rej}) $ be 1KWQFA with
	$ Q_{non}=\{p_{1},p_{2},q_{0},q_{1},q_{2},q_{3}\} $ ,
	$ Q_{acc}=\{A\} $,
	$ Q_{rej}=\{R_{i} \mid 1 \le i \le 5\} $.
	The transition function of $ \mathcal{M} $ is shown in Figure \ref{figure:pal1QFARestart}.	
	As before, we assume that the transitions not specified in the figure are filled in to ensure that
	the $ \mathsf{U}_{\sigma} $ are unitary.
	 \begin{figure}[here]
		\caption{Specification of the transition function of $ \mathcal{M} $}
		\setlength{\extrarowheight}{5pt}
		\small{
		\begin{center}
		\begin{tabular}{|c|l|l|}
			\hline
			Paths & \multicolumn{1}{c|}{$ \mathsf{U}_{\cent}, \mathsf{U}_{a} $} &
			\multicolumn{1}{c|}{$ \mathsf{U}_{b} $} \\
			\hline
			& 
			$ \mathsf{U}_{\cent} \ket{q_{0}} = \frac{1}{\sqrt{2}} \ket{p_{1}} +  \frac{1}{\sqrt{2}} \ket{q_{1}} $
			& \\
			\hline
			$ \mathsf{path_{1}} $ 
			&
			$ \begin{array}{@{}l@{}}
				\mathsf{U}_{a} \ket{p_{1}} = \sqrt{\frac{2}{3}} \ket{p_{1}} - \frac{1}{\sqrt{3}} \ket{R_{1}} \\
				\mathsf{U}_{a} \ket{p_{2}} = \frac{1}{\sqrt{6}} \ket{p_{1}} + \frac{1}{\sqrt{6}} \ket{p_{2}}
				+ \frac{1}{\sqrt{3}} \ket{R_{1}} + \frac{1}{\sqrt{3}} \ket{R_{2}}
			\end{array} $
			&
			$ \begin{array}{@{}l@{}}
				\mathsf{U}_{b} \ket{p_{1}} = \frac{1}{\sqrt{6}} \ket{p_{1}} + \frac{1}{\sqrt{6}} \ket{p_{2}}
				+ \frac{1}{\sqrt{3}} \ket{R_{1}} + \frac{1}{\sqrt{3}} \ket{R_{2}} \\
				\mathsf{U}_{b} \ket{p_{2}} = \sqrt{\frac{2}{3}} \ket{p_{2}} -  \frac{1}{\sqrt{3}} \ket{R_{1}} \\
			\end{array} $
			\\
			\hline
			$ \mathsf{path_{2}}$
             &
             $ \begin{array}{@{}l@{}}
				\mathsf{U}_{a} \ket{q_{1}} = \frac{1}{\sqrt{6}} \ket{q_{1}} + \frac{1}{\sqrt{6}} \ket{q_{3}}
				-\frac{1}{\sqrt{3}} \ket{R_{3}} + \frac{1}{\sqrt{3}} \ket{R_{4}} \\
				\mathsf{U}_{a} \ket{q_{2}} = \sqrt{\frac{2}{3}} \ket{q_{2}} + \frac{1}{\sqrt{3}} \ket{R_{5}} \\
				\mathsf{U}_{a} \ket{q_{3}} = \sqrt{\frac{2}{3}} \ket{q_{3}} + \frac{1}{\sqrt{3}} \ket{R_{3}} \\
			\end{array} $
			&
			$ \begin{array}{@{}l@{}}
				\mathsf{U}_{b} \ket{q_{1}} =\frac{1}{\sqrt{6}} \ket{q_{1}} + \frac{1}{\sqrt{6}} \ket{q_{2}}
				-\frac{1}{\sqrt{3}} \ket{R_{3}} + \frac{1}{\sqrt{3}} \ket{R_{4}} \\
				\mathsf{U}_{b} \ket{q_{2}} = \sqrt{\frac{2}{3}} \ket{q_{2}} + \frac{1}{\sqrt{3}} \ket{R_{3}} \\
				\mathsf{U}_{b} \ket{q_{3}} = \sqrt{\frac{2}{3}} \ket{q_{3}} + \frac{1}{\sqrt{3}} \ket{R_{5}} \\
			\end{array} $
			\\
			\hline
			&
			\multicolumn{2}{c|}{$ \mathsf{U}_{\dollar} $}
			\\
			\hline
			$ \mathsf{path_{1}} $
			&
			\multicolumn{2}{l|}{
				$ \begin{array}{@{}l@{}}
					\mathsf{U}_{\dollar}\ket{p_{1}} = \ket{R_{1}} \\
					\mathsf{U}_{\dollar}\ket{p_{2}} = \frac{1}{\sqrt{2}} \ket{A} + \frac{1}{\sqrt{2}} \ket{R_{2}} \\
				\end{array} $
			}
			\\
			\hline
			$ \mathsf{path_{2}} $
			&
			\multicolumn{2}{l|}{
				$ \begin{array}{@{}l@{}}
					\mathsf{U}_{\dollar}\ket{q_{1}} = \ket{R_{3}} \\
					\mathsf{U}_{\dollar}\ket{q_{2}} = -\frac{1}{\sqrt{2}} \ket{A} + \frac{1}{\sqrt{2}} \ket{R_{2}} \\
					\mathsf{U}_{\dollar}\ket{q_{3}} = \ket{R_{4}} \\
				\end{array} $
			}
			\\
             \hline
		\end{tabular}
		\end{center}
		}
		\label{figure:pal1QFARestart}
	\end{figure}

	$ \mathsf{path_{2}} $ and $ \mathsf{path_{1}} $ encode the input
	string and its reverse \cite{Ra63,Pa71} into the amplitudes of  states $ q_{2} $ and $ p_{2} $, respectively. 
	If the input is $ w=w_{1}w_{2}\cdots w_{l} $, then the values of
	these amplitudes just before the transition associated with the right end-marker in 
	the first round are as follows:
	\begin{itemize}
		\item State $ p_{2} $ has amplitude $ \frac{1}{\sqrt{2}} \left(\sqrt{\frac{2}{3}}\right)^{|w|}
		(0.w_{l}w_{l-1}\cdots w_{1})_{2} $, and 
		\item state $ q_{2} $ has amplitude $ \frac{1}{\sqrt{2}} \left(\sqrt{\frac{2}{3}}\right)^{|w|}
		(0.w_{1}w_{2} \cdots w_{l})_{2}. $
	\end{itemize}
	The factor of $ \sqrt{\frac{2}{3}} $ is due to the ``loss" of amplitude necessitated by the fact that the
	originally non-unitary encoding matrices of \cite{Ra63,Pa71} have to be ``embedded" 
	in a unitary matrix \cite{YS09D}. Note that the symbols $ a $ and $ b $ are encoded by 0 and 1, respectively.

	If $ w \in L_{pal} $, the acceptance probability is zero.
	If $ w \in \overline{L_{pal}} $, the acceptance probability is minimized by
	strings which are almost palindromes, except for a single defect in the middle, that is,
	when $ |w|=2k $ for $ k \in \mathbb{Z}^{+} $,  $ w_{i}=w_{2k-i+1} $, where $ 1 \le i \le k-1 $, and
	$ w_{k} \neq w_{k+1} $, so,
	\[ g_{\mathcal{M}}(w) \ge \frac{1}{8} \left( \frac{1}{3} \right)^{|w|}. \]
	By Lemma \ref{lemma:kwqfa-to-1qfarestart-exponential}, there exists a $ 15 $-state 
	1QFA$ ^{\circlearrowleft}$ $ \mathcal{M}_{\epsilon} $ recognizing $ \overline{L_{pal}} $
	with positive one-sided bounded error, whose expected runtime is 
	$ O(\frac{1}{\epsilon}3^{|w|}|w|) $.
	By swapping accepting and rejecting states of $ \mathcal{M}_{m} $,
	we can get the desired machine.
\end{proof}

Note that the technique used in the proof above can be extended easily
to handle bigger input alphabets by using the matrices defined on Page 169 of \cite{Pa71},
and the method of simulating stochastic matrices by unitary matrices described in \cite{YS09D}.

\section{1PFA$ ^{\circlearrowleft} $ vs. 2PFA} \label{section:1pfaR-vs-2pfa}

It is interesting to examine the power of the restart move in
classical computation as well. Any 1PFA$ ^{\circlearrowleft} $ which
runs in expected $ t $ steps
can be simulated by a 2PFA which runs in expected $ 2t $ steps (see
Lemma \ref{lemma:restart-time}).
We ask in this section whether the restart move can substitute the ``left" and
``stationary" moves of a 2PFA without loss of computational power.
Since every polynomial-time 2PFA recognizes a regular language, which
can of course be recognized by using only ``right" moves, we focus on
the best-known example of a nonregular language that can be recognized
by an exponential-time 2PFA.
\begin{theorem}
       There exists a natural number $k$, such that for any $ \epsilon>0 $,
there exists a $k$-state 1PFA$ ^{\circlearrowleft} $ $
\mathcal{P}_{\epsilon} $
       recognizing language $ L_{eq} $ with error bound $ \epsilon $ and
expected runtime
       $ O( (\frac{2}{\epsilon^{2}})^{|w|}|w|) $, where $ w $ is the input string.
\end{theorem}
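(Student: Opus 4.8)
The plan is to design a single round of $\mathcal{P}_{\epsilon}$ that performs one left-to-right sweep, so that the first-round acceptance and rejection probabilities have a ratio distinguishing $n=m$ from $n\neq m$, and then to appeal to Lemma~\ref{lemma:prej-over-pacc} and Lemma~\ref{lemma:expected-runtime}. Set $q=\epsilon^{2}/2$ and fix two constants, say $\alpha=\frac{1}{2}$ and $\gamma=\frac{\epsilon}{4}$. First I would deal with inputs not of the form $a^{*}b^{*}$: every branch carries a ``format flag'' that halts and rejects with probability $1$ as soon as an $a$ is read after a $b$ has been seen (or a non-$\{a,b\}$ symbol appears), forcing $p_{acc}(\mathcal{P}_{\epsilon},w)=0$ and hence $\mathsf{P_{rej}}(\mathcal{P}_{\epsilon},w)=1$ on all ill-formed $w$.

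For $w=a^{n}b^{m}$, on the left end-marker the machine branches with probability $\frac{1}{3}$ into three sub-paths, each using its own pair of states (one for scanning the $a$-block, one for the $b$-block). The \emph{accept sub-path} continues with probability $q$ on every $a$ and every $b$, reaching $\dollar$ with probability $q^{\,n+m}$, where it accepts with probability $\alpha$ and restarts otherwise. The \emph{first reject sub-path} skips the $b$-block, continues with probability $q^{2}$ on every $a$, reaches the end of the $a$-block with probability $q^{2n}$, and there rejects with probability $\gamma$ (restarting otherwise); symmetrically, the \emph{second reject sub-path} skips the $a$-block, uses $q^{2}$ on every $b$, and rejects after $\dollar$ with probability $\gamma$, contributing $q^{2m}$. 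Hence $p_{acc}(\mathcal{P}_{\epsilon},w)=\frac{\alpha}{3}\,q^{\,n+m}$ and $p_{rej}(\mathcal{P}_{\epsilon},w)=\frac{\gamma}{3}\bigl(q^{2n}+q^{2m}\bigr)$, the format-flag rejections only adding to $p_{rej}$.

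Correctness then reduces to the elementary inequality $q^{2n}+q^{2m}\ge 2q^{\,n+m}$, an equality exactly when $n=m$. If $n=m$ then $p_{rej}/p_{acc}=2\gamma/\alpha=\epsilon$; if $n\neq m$ then, since $q<1$ and $|n-m|\ge 1$, we have $q^{2n}+q^{2m}\ge q^{\,2\min(n,m)}=q^{\,n+m}\,q^{-|n-m|}\ge q^{\,n+m}/q$, so $p_{acc}/p_{rej}\le(\alpha/\gamma)\,q=\epsilon$. In both cases $\epsilon\le\epsilon/(1-\epsilon)$, so Lemma~\ref{lemma:prej-over-pacc} gives error bound $\epsilon$. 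For the runtime, $p_{halt}(\mathcal{P}_{\epsilon},w)\ge\frac{\alpha}{3}q^{\,n+m}=\Omega\bigl((\epsilon^{2}/2)^{|w|}\bigr)$ on well-formed inputs (with an analogous lower bound coming from the format-flag rejection on ill-formed ones), and every branch of a round has at most $|w|+2$ steps, so Lemma~\ref{lemma:expected-runtime} yields expected runtime $O\bigl((2/\epsilon^{2})^{|w|}|w|\bigr)$. The number of states --- $q_{0}$, an accepting state, a rejecting state, a restart state, and the fixed (at most six) sub-path states --- is a constant $k$ independent of $\epsilon$.

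The conceptual heart is the reduction of membership in $L_{eq}$ to the arithmetic--geometric-mean equality $q^{2n}+q^{2m}=2q^{\,n+m}$, together with the observation that a single defect $|n-m|\ge1$ blows up the probability ratio by the factor $1/q=2/\epsilon^{2}$, which the restart mechanism converts into bounded error; once this is in place the argument is routine. I expect the only real obstacle to be bookkeeping: wiring the format check so that it interacts correctly with the probabilistic continue/restart transitions (so that no ill-formed input is ever accepted, and each round halts or restarts with probability $1$ in $O(|w|)$ steps), and checking the boundary cases $n=0$, $m=0$, and $w=\lambda$, which must be accepted since $\lambda=a^{0}b^{0}\in L_{eq}$.
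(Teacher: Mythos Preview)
Your proposal is correct and follows essentially the same construction as the paper's proof: the paper also splits into three equiprobable paths at $\cent$, with one ``accept'' path surviving each symbol with probability $x=\epsilon^{2}/2$ and two symmetric ``reject'' paths surviving the $a$'s (resp.\ $b$'s) with probability $x^{2}$ and rejecting at the end with probability $\epsilon/2$, yielding the same ratios $p_{rej}/p_{acc}=\epsilon$ on members and $p_{acc}/p_{rej}<\epsilon$ on non-members via the same $x^{2m}+x^{2n}$ versus $x^{m+n}$ comparison. Your constants $\alpha=\tfrac12$, $\gamma=\tfrac{\epsilon}{4}$ differ only cosmetically from the paper's $1$ and $\tfrac{\epsilon}{2}$, and your explicit invocation of AM--GM and the boundary cases is a welcome clarification of what the paper leaves implicit.
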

\begin{proof}
       We will construct the 1PFA$ ^{\circlearrowleft} $ $
\mathcal{P}_{\epsilon} $ as follows:
       Let $ x = \frac{\epsilon^{2}}{2} $.
       The computation splits into three paths called $ \mathsf{path_{1}} $,
$ \mathsf{path_{2}} $, and
       $ \mathsf{path_{3}} $ with equal probabilities on symbol $ \cent $.
       All three paths, while performing their main tasks, parallelly check
whether the input is of the form
       $ a^{*}b^{*} $, if not, all paths simply reject.
       The main tasks of the  paths are as follows:
       \begin{list}{$ \bullet $}{}
               \item $ \mathsf{path_{1}} $ moves on with probability $ x $ and
restarts with probability $ 1-x $
               when reading symbols $ a $ and $ b $. After reading the right
end-marker $ \dollar $,
               it accepts with probability with $ 1 $.
               \item $ \mathsf{path_{2}} $ moves on with probability $ x^{2} $ and
restarts with probability $ 1-x^{2} $
               when reading symbol $ a $. On $ b $'s, it continues with the ``syntax" check.
               After reading the $ \dollar $, it rejects with probability $
\frac{\epsilon}{2} $ and
               restarts with probability $ 1-\frac{\epsilon}{2} $.
               \item $ \mathsf{path_{3}} $ is similar to $ \mathsf{path_{2}} $,
except that the transitions
               of symbols $ a $ and $ b $ are interchanged.
       \end{list}

       If the input is of the form $ a^{m}b^{n} $, then the accept and
reject probabilities
       of the first round are calculated as
   \[
       p_{acc}(\mathcal{P}_{\epsilon},w)=\frac{1}{3} x^{m+n},~ \mbox{ and }~
           p_{rej}(\mathcal{P}_{\epsilon},w)= \frac{\epsilon}{6} \left(
x^{2m} + x^{2n} \right).
    \]

       If $ m = n $, then
       \[ \frac{p_{rej}(\mathcal{P}_{\epsilon},w)}{p_{acc}(\mathcal{P}_{\epsilon},w)}
= \epsilon. \]

       If $ m \neq n $ (assume without loss of generality that $ m = n+d $
for some $ d \in \mathbb{Z}^{+} $) ,
       then
    \[
            \frac{p_{acc}(\mathcal{P}_{\epsilon},w)}{p_{rej}(\mathcal{P}_{\epsilon},w)}
=
            \frac{2}{\epsilon} \frac{x^{2n+d}}{x^{2n+2d}+x^{2n}} =
            \frac{2}{\epsilon} \frac{x^{d}}{x^{2d}+1} <
            \frac{2}{\epsilon} x^{d} \le \frac{2}{\epsilon} x
       \]
       By replacing $ x=\dfrac{\epsilon^{2}}{2} $, we can get
       \[ \frac{p_{acc}(\mathcal{P}_{\epsilon},w)}{p_{rej}(\mathcal{P}_{\epsilon},w)}
< \epsilon. \]

       By using Lemma \ref{lemma:prej-over-pacc}, we can conclude that $
\mathcal{P}_{\epsilon} $
       recognizes $ L_{eq} $ with error bound $ \epsilon $.

       Since $ p_{halt}(\mathcal{P}_{\epsilon},w) $ is always greater than $
\frac{1}{3} x^{|w|}  $,
       the expected runtime of the algorithm is $ O(
(\frac{2}{\epsilon^{2}})^{|w|}|w|) $,
       where $ w $ is the input string.
\end{proof}

\section{Concluding remarks} \label{section:ConcludingRemarks}

By a theorem of Dwork and Stockmeyer \cite{DS90},  for a fixed $
\epsilon< \frac{1}{2} $,
if $L$ is recognized by a $O(n)$--time 2PFA with $c$ states within error
probability $ \epsilon $, then
$ L $ is also recognized by a DFA with $c^{bc^2}$ states, where the
number $b$ depends on the constant hidden in the big-$O$.
The two-way machines of Section \ref{section:Conciseness}
can be seen to have such factors that grow with $m$ in the expressions
for their time
complexities; this is how the machines described in that section achieve
their huge superiority in terms  of the state cost over the other
models that they are compared with.

It is known \cite{YS09D} that 2KWQFA's can recognize some
nonstochastic languages
(i.e. those which cannot be recognized by 2PFA's) in the unbounded
error setting.
On the other hand, we conjecture that 2QFA's with classical head
position, such as the 2QCFA, cannot recognize any nonstochastic language.
Therefore, it is an interesting question whether 2QFA$
^{\curvearrowleft} $'s (or possibly an even more general
2QFA model allowing head superposition) can recognize any
nonstochastic language with bounded error or not.

Some other open questions related to this work are:
\begin{enumerate}
  \item Can 1QFA$ ^{\circlearrowleft} $'s simulate 2QCFA's?
   \item Are 1PFA$^{\circlearrowleft}$'s (with just ``restart" and
``right" moves) equivalent
   in power to 2PFA's in the bounded-error setting, as hinted by
Section \ref{section:1pfaR-vs-2pfa}?
   \item Does there exist an analogue of the Dwork-Stockmeyer theorem
mentioned above
   for two-way quantum finite automata?
\end{enumerate}

\section*{Acknowledgements} \label{section:Acknowledgement}
We are grateful to Andris Ambainis and John Watrous for their helpful
comments on earlier versions of this paper. We also thank
R\={u}si\c{n}\v{s} Freivalds for kindly providing us a copy of
reference \cite{LSF09}.

\bibliographystyle{alpha}
\bibliography{YakaryilmazSay}

\appendix 

\section{A 1QFA$ ^{\circlearrowleft} $ algorithm for $ L_{eq} $} \label{appendix:L_eq}

\begin{theorem}
	For any $ \epsilon>0 $, there exists a 15-state 1QFA$ ^{\circlearrowleft} $ $ \mathcal{M}_{\epsilon} $,
	which accepts any $ w \in L_{eq} $ with certainty, and rejects any $ w \notin L_{eq} $
	with probability at least $ 1-\epsilon $.
	Moreover, the expected runtime of $ \mathcal{M}_{\epsilon} $ on $ w $ is 
	$ O(\frac{1}{\epsilon}(2\sqrt{2})^{|w|}|w|) $ .
\end{theorem}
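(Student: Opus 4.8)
The plan is to follow the template of Theorems~\ref{theorem:B_m}, \ref{theorem:C_m} and, most closely, Theorem~\ref{theorem:1qfa-restart-palindrome}: I will first construct a $12$-state 1KWQFA $\mathcal{M}$ recognizing $\overline{L_{eq}}$ with positive one-sided unbounded error whose gap function decays no faster than $(2\sqrt{2})^{-n}$, then apply Lemma~\ref{lemma:kwqfa-to-1qfarestart-exponential} with $c=2\sqrt{2}$ to obtain a $15$-state 1QFA$^{\circlearrowleft}$ recognizing $\overline{L_{eq}}$ with one-sided bounded error $\epsilon$ and expected runtime $O(\frac{1}{\epsilon}(2\sqrt{2})^{|w|}|w|)$, and finally swap its accepting and rejecting states to obtain $\mathcal{M}_{\epsilon}$. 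The twelve states of $\mathcal{M}$ will be apportioned much as in Figure~\ref{figure:pal1QFARestart}: about six non-halting states (a deterministic sub-automaton checking membership in $a^{*}b^{*}$ together with two ``counting'' paths), a single accepting state $A$, and five auxiliary rejecting states used to soak up the amplitude lost in the counting steps.

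The automaton $\mathcal{M}$ operates as follows. While scanning, it deterministically verifies that the input has the form $a^{*}b^{*}$; as soon as an $a$ follows a $b$ it routes all amplitude to $A$, so every $w\notin a^{*}b^{*}$, which lies in $\overline{L_{eq}}$, is accepted with certainty. On a well-formed input $w=a^{n}b^{m}$, $\mathcal{M}$ splits with equal amplitude into $\mathsf{path_{1}}$ and $\mathsf{path_{2}}$, which encode the two counts into amplitudes in the manner of \cite{Ra63,Pa71,YS09D}: one path contracts its carrier amplitude by a factor $\alpha$ on each $a$ and by $\beta$ on each $b$, the other does the reverse, and the ``missing'' amplitude at every step is sent into rejecting states so that each $\mathsf{U}_{\sigma}$ extends to a unitary. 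Just before $\dollar$ the two carriers hold amplitudes proportional to $\alpha^{n}\beta^{m}$ and $\beta^{n}\alpha^{m}$; a two-point QFT on $\dollar$ then deposits their difference, proportional to $(\alpha\beta)^{\min(n,m)}\bigl(\alpha^{|n-m|}-\beta^{|n-m|}\bigr)$, onto $A$. This vanishes exactly when $n=m$, so every member of $L_{eq}$ is rejected with certainty and the error is one-sided; it is nonzero when $n\neq m$, so $\mathcal{M}$ recognizes $\overline{L_{eq}}$ with positive one-sided unbounded error.

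What remains, and what I expect to be the crux, is the quantitative estimate of $g_{\mathcal{M}}$. Since $g_{\mathcal{M}}(n)$ is the least acceptance probability among members of $\overline{L_{eq}}$ of length at most $n$, a short argument (comparing, for each odd length, the strings $a^{k+1}b^{k}$, $a^{k}b^{k+1}$, $a^{k+2}b^{k-1},\dots$) shows that the minimum is attained by the near-``palindromes'' $a^{k+1}b^{k}$ and $a^{k}b^{k+1}$, reducing the task to bounding $(\alpha\beta)^{2k}(\alpha-\beta)^{2}$ from below when $|w|=2k+1$. Choosing the contraction factors so that $(\alpha\beta)^{-1}=2\sqrt{2}$ makes the exponential rate come out as $(\alpha\beta)^{2k}=(2\sqrt2)\,(2\sqrt2)^{-|w|}$, and the point of the construction is to lay out the initial split, the QFT, and the five rejecting states so that the residual constant factors do not spoil the bound $g_{\mathcal{M}}(n)\ge(2\sqrt{2})^{-n}$ — this simultaneous juggling of the unitarity constraints, the state budget, and the exact numerical value of the gap constant is the only real obstacle, everything afterwards being a direct appeal to Lemma~\ref{lemma:kwqfa-to-1qfarestart-exponential}.
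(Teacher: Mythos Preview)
Your proposal is correct and follows essentially the same approach as the paper: the paper also builds a $12$-state 1KWQFA for $\overline{L_{eq}}$ that splits into two paths with contraction factors $\alpha=\tfrac12$ and $\beta=\tfrac{1}{\sqrt2}$ (so $(\alpha\beta)^{-1}=2\sqrt2$, exactly your constraint), subtracts the two carrier amplitudes via a QFT on $\dollar$, bounds the gap by $g_{\mathcal{M}}(|w|)>(\tfrac12)^{3|w|/2+5}$, and then invokes Lemma~\ref{lemma:kwqfa-to-1qfarestart-exponential} and swaps accept/reject states. The only cosmetic difference is the partition of the halting states: the paper uses three accepting and three rejecting states (rather than your $1+5$), which lets each path do its own syntax check and accept directly on malformed inputs.
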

\begin{proof}
	We will contruct a $ 12 $-state 1KWQFA recognizing $ \overline{L_{eq}} $ with positive one-sided unbounded error.
	Let $ \mathcal{M} = (Q,\Sigma,\delta,q_{0},Q_{acc},Q_{rej}) $ be 1KWQFA with
	$ Q_{non}=\{p_{0},p_{1},p_{2},q_{0},q_{1},q_{2}\} $ ,
	$ Q_{acc}=\{A_{1},A_{2},A_{3}\}, Q_{rej}=\{R_{1},R_{2},R_{3}\} $.
	The transition function of $ \mathcal{M} $ is shown in Figure \ref{figure:anbn1QFARestart}.	
	As before, we assume that the transitions not specified in the figure are filled in to ensure that
	the $ \mathsf{U}_{\sigma} $ are unitary.

	\begin{figure}[here]
		\caption{Specification of the transition function of $ \mathcal{M} $}
		\setlength{\extrarowheight}{2pt}
		\small{
		\begin{center}
		\begin{tabular}{|c|l|l|l|}
			\hline
			Paths & \multicolumn{1}{c|}{$ \mathsf{U}_{\cent}, \mathsf{U}_{a} $} &
			\multicolumn{1}{c|}{$ \mathsf{U}_{b} $} & \multicolumn{1}{c|}{$ \mathsf{U}_{\dollar} $} \\
			\hline
			& $ \mathsf{U}_{\cent} \ket{q_{0}} = \frac{1}{\sqrt{2}} \ket{p_{0}} + \frac{1}{\sqrt{2}}\ket{q_{0}} $ 
			& & \\
			\hline
			$ \mathsf{path_{1}} $
			&
			$ \begin{array}{@{}l@{}}
				\mathsf{U}_{a} \ket{p_{0}} = \frac{1}{2} \ket{p_1} + \frac{1}{2}\ket{R_{1}} 
				+ \frac{1}{\sqrt{2}}\ket{R_{2}} \\
				\mathsf{U}_{a} \ket{p_{1}} = \frac{1}{2} \ket{p_1} + \frac{1}{2}\ket{R_{1}}
				- \frac{1}{\sqrt{2}}\ket{R_{2}} \\
				\mathsf{U}_{a} \ket{p_{2}} = \ket{A_{1}}
			\end{array} $
			&
			$ \begin{array}{@{}l@{}}
				\mathsf{U}_{b} \ket{p_{0}} = \ket{A_{1}} \\
				\mathsf{U}_{b} \ket{p_{1}} = \frac{1}{\sqrt{2}} \ket{p_{2}}+\frac{1}{\sqrt{2}}\ket{R_{1}} \\
				\mathsf{U}_{b} \ket{p_{2}} = \frac{1}{\sqrt{2}} \ket{p_{2}}-\frac{1}{\sqrt{2}}\ket{R_{1}}
			\end{array} $
			&
			$ \begin{array}{@{}l@{}}
				\mathsf{U}_{\dollar}\ket{p_{0}}=\ket{R_{1}} \\
				\mathsf{U}_{\dollar}\ket{p_{1}} = \ket{A_{1}} \\
				\mathsf{U}_{\dollar}\ket{p_{2}} = \frac{1}{\sqrt{2}} \ket{R_{2}} + \frac{1}{\sqrt{2}} \ket{A_{2}} \\
			\end{array} $
			\\
			\hline
			$ \mathsf{path_{2}}$
			&
			$ \begin{array}{@{}l@{}}
				\mathsf{U}_{a} \ket{q_{0}} = \frac{1}{\sqrt{2}} \ket{q_1} + \frac{1}{\sqrt{2}} \ket{R_{3}} \\
				\mathsf{U}_{a} \ket{q_{1}} = \frac{1}{\sqrt{2}} \ket{q_1}-\frac{1}{\sqrt{2}} \ket{R_{3}} \\
				\mathsf{U}_{a} \ket{q_{2}} = \ket{A_{2}}
			\end{array} $
			&
			$ \begin{array}{@{}l@{}}
				\mathsf{U}_{b} \ket{q_{0}} = \ket{A_{2}} \\
				\mathsf{U}_{b} \ket{q_{1}} = \frac{1}{2} \ket{q_2} + \frac{1}{2}\ket{R_{2}} 
				+ \frac{1}{\sqrt{2}}\ket{R_{3}} \\
				\mathsf{U}_{b} \ket{q_{2}} = \frac{1}{2} \ket{q_2} +
				\frac{1}{2}\ket{R_{2}} - \frac{1}{\sqrt{2}}\ket{R_{3}}
			\end{array} $
			&
			$ \begin{array}{@{}l@{}}
				\mathsf{U}_{\dollar}\ket{q_{0}} = \ket{R_{3}} \\
				\mathsf{U}_{\dollar}\ket{q_{1}} = \ket{A_{3}} \\
				\mathsf{U}_{\dollar}\ket{q_{2}} = \frac{1}{\sqrt{2}} \ket{R_{2}} - \frac{1}{\sqrt{2}} \ket{A_{2}} \\
			\end{array} $
			\\
			\hline       
		\end{tabular}
		\end{center}
		}\label{figure:anbn1QFARestart}
	\end{figure}

	As seen in the figure, $ \mathcal{M} $ branches to two paths on the left end-marker. 
	All paths rejects immediately if the input $ w \in \{a,b\}^{*} $ is the empty string, 
	and accepts with nonzero probability, say $ \alpha $, if it is of the form 
	$ (\{a,b\}^{*} \setminus a^{*}b^{*}) \cup a^{+} \cup b^{+}  $.
	Otherwise, $ w = a^{m}b^{n}$ $ (m,n>0) $, and the amplitudes of the
	paths just before the transition associated with the right end-marker in the first round are as follows:
	\begin{itemize}
		\item State $ p_{2} $ has amplitude $ \frac{1}{\sqrt{2}} (\frac{1}{2})^{m} (\frac{1}{\sqrt{2}})^{n}$,
		\item state $ q_{2} $ has amplitude $ \frac{1}{\sqrt{2}} (\frac{1}{\sqrt{2}})^{m} (\frac{1}{2})^{n}$.
	\end{itemize}
	If $ m=n $,  then the accepting probability is zero. 
	If $ m \neq n $ (assume that $ m=n+d $ for some $ d \in \mathbb{Z}^{+} $), 
	then the accepting probability is equal to
	\[ \left( \frac{1}{2} \right)^{m+n+1} \left( \left( \frac{1}{\sqrt{2}} \right)^{m} - 
	\left( \frac{1}{\sqrt{2}} \right)^{n} \right)^{2}
	=
	\underbrace{\left( \frac{1}{2} \right)^{m+2n+1}}_{ > \left( \frac{1}{2} \right)^{\frac{3|w|}{2}+1} } 
	\underbrace{\left( 1 - \left( \frac{1}{\sqrt{2}} \right)^{d-2}
	+ \left( \frac{1}{2} \right)^{d} \right)}_{ > \frac{1}{16}} 
	\]
	Since $ \alpha $ is always greater than this value,
	\[ g_{\mathcal{M}}(|w|) > \left( \frac{1}{2} \right)^{\frac{3|w|}{2}+5}, \]
	for $ |w|>0 $.
	By Lemma \ref{lemma:kwqfa-to-1qfarestart-exponential}, there exists a $ 15 $-state 
	1QFA$ ^{\circlearrowleft}$ $ \mathcal{M}_{\epsilon} $ recognizing $ \overline{L_{eq}} $
	with positive one-sided bounded error and whose expected runtime is 
	$ O(\frac{1}{\epsilon}(2\sqrt{2})^{|w|}|w|) $.
	By swapping accepting and rejecting states of $ \mathcal{M}_{m} $,
	we can get the desired machine.
\end{proof}

\end{document}